\DeclareMathOperator{\trace}{tr}
\newtheorem{lemma}{Lemma}
\newtheorem{theorem}{Theorem}
\newtheorem{deffinition}{Deffinition}
\newtheorem{corollary}{Corollary}
\title{Ihara Zeta Entropy}
\author{Supriyo Dutta\thanks{Email: \texttt{dosupriyo@gmail.com}}, Partha Guha\thanks{Email: \texttt{partha@bose.res.in}} \\ S. N. Bose National Centre for Basic Sciences \\ Block - JD, Sector - III, Salt Lake City, Kolkata - 700 106}
\date{} 
\begin{document}
	\maketitle
	
	\begin{abstract}
		In this article, we introduce an entropy based on the formal power series expansion of the Ihara Zeta function. We find a number of inequalities based on the values of the Ihara zeta function. These new entropies are applicable in symbolic dynamics and the dynamics of billiards. 
	\end{abstract}

	\section{Introduction}
	
		The theory of the Ihara zeta function is an interesting topic of research in graph theory and combinatorics \cite{terras2010zeta, ihara1966discrete, giscard2017algebraic, deitmar2015ihara}. Let $G = (V(G), E(G))$ be a simple, finite, connected, undirected graph without any vertex of degree one. In addition, $G$ is not a cycle graph. Throughout, this article $n$ is the number of vertices, and $m$ is the number of edges. In the literature of the Ihara zeta function, we assign two orientations for two opposite directions on every edge. Thus, the set of all orientations on the edges can be collected as 
		\begin{equation}\label{alphabet}
		\mathcal{E} = \{e^{(1)}, e^{(2)}, \dots e^{(m)}, e^{(m+1)} = (e^{(1)})^{-1}, \dots e^{(2m)} = (e^{(m)})^{-1} \}.
		\end{equation}
		If the orientation of an edge $e$ is given by $e^{(k)} = (u,v)$ then its inverse orientation is $e^{(m + k)} = (e^{(k)})^{-1} = (v,u)$ for $k = 1, 2, \dots m$. Here, $u$ is the initial vertex of $e^{(k)}$ and $v$ is the terminating vertex of $e^{(k)}$. We denote $i(e^{(k)}) = u$ and $t(e^{(k)}) = v$. A cycle in $G$ is a finite walk $W = e_1 e_2 \dots e_r$, such that, the initial vertex of $e_1$ is the terminating vertex of $e_r$. Here $r$ is the length of the cycle. In the theory of Ihara zeta function, we also assume that in an oriented cycle $e_{k +1} \neq (e_k)^{-1}$ for any $k = 1, 2, \dots (r-1)$ and $e_1 \neq (e_r)^{-1}$. Two cycles $W^{(1)} = e^{(1)}_1 e^{(1)}_2 \dots e^{(1)}_r$ and $W^{(2)} = e^{(2)}_1 e^{(2)}_2 \dots e^{(2)}_r$ are equivalent if $e^{(2)}_1 = e^{(1)}_k, e^{(2)}_2 = e^{(1)}_{k+1}, \dots e^{(2)}_{r - k + 1} = e^{(1)}_{r}, e^{(2)}_{r - k + 2} = e^{(1)}_{1}, \dots e^{(2)}_{r} = e^{(1)}_{k - 1}$ for some $k = 1, 2, \dots (r-1)$. The set of equivalence classes of cycles are called prime cycles. We denote a prime cycle by $P$ and its length is denoted ny $\gamma(P)$. Given any graph $G$ the Ihara zeta function is defined by 
		\begin{equation}\label{Ihara_zeta_1}
		\zeta_G(z) = \prod_{[P]}\left(1 - z^{\gamma(P)}\right)^{-1},
		\end{equation}
		where $z \in \mathbb{C}$ and $0 < |z| \leq \alpha_G$, which is the radius of convergence for $\zeta_G(z)$. 
		
		The idea of entropy plays a crucial role in thermodynamics, information theory, and dynamical system. The community of theoretical and mathematical physicists is interested to generalize the idea of entropy. Recently, a number of entropies based on formal group theory \cite{dieudonne1973introduction} is proposed in a series of articles \cite{tempesta2015theorem, tempesta2016beyond}. Although $\zeta_G$ is an infinite product, the function $(\zeta_G(x)) ^{-1}$ can be expressed as a power series in terms of the adjacency matrix of the oriented line graph associated to $G$ \cite{kotani20002}. In this work, we introduce the entropy functions based on the formal power series of Ihara zeta function first time in literature. We describe these entropies as Ihara entropy. This investigation gifts us a number of inequalities and interesting properties Ihara zeta function. The interface of symbolic dynamics and graph theory is well investigated. The Ihara entropy, introduced in this article is applicable in the context of the symbolic dynamics of billiards. 
		
		This article is distributed as follows. In section 2 we derive a number of properties of Ihara zeta function which are applicable for further investigations. Section 3 is dedicated to the background and definition of Ihara entropy. Entropy is a measure of information contained in a random variable. In section 4 we discuss the information theoretic properties of Ihara entropy, which includes a check of Shannon-Khinchin axioms. Interestingly the Ihara zeta function is the Rulle zeta function in the context of symbolic dynamics. Hence in section 5, we apply the Ihara entropy in the symbolic dynamics of billiard balls. Then we conclude the article.

	\section{Properties of Ihara zeta function}
		
		Recall that there are two orientations on every edge of the graph $G$. All the orientations are collected in the set $\mathcal{E}$ described in equation (\ref{alphabet}). The oriented line graph of a given graph $G$ is denoted by $\overline{G} = (V(\overline{G}), E(\overline{G}))$ which is defined as follows:
		\begin{equation}\label{oriented_line_graph}
		\begin{split}
		V(\overline{G}) & = \mathcal{E},\\
		E(\overline{G}) & = \{((u, v), (v, w )) \in \mathcal{E} \times \mathcal{E} : u \neq w; u, v, w \in V(G)\}.
		\end{split}
		\end{equation}
		Hence, the vertices in the graph $\overline{G}$ are the elements in $\mathcal{E}$. Clearly $\overline{G}$ has $2m$ vertices. Note that, $\overline{G}$ is a connected simple graph as the graph $G$ is considered as connected and simple.
		
		Recall that the degree of a vertex in a graph is the total number of edges incident to the vertex. In addition, the degree of a graph is the sum of all vertex degrees, which is twice the total number of edges in the graph. Now, we have the following observation.
		\begin{lemma}
			The oriented line graph of $G$ contains $2\sum_{(u, v) \in E(G)}(d_u + d_v - 2)$ edges, where $d_u$ and $d_v$ are the degree of the vertices $u$ and $v$ in $G$.
		\end{lemma}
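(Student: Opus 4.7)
The plan is to compute the degree of each vertex of $\overline{G}$ and then sum, ultimately invoking the handshaking observation recorded just before the lemma. Since $V(\overline{G}) = \mathcal{E}$ has $2m$ elements, the task reduces to (a) showing that the $\overline{G}$-degree of an oriented edge $(u,v) \in \mathcal{E}$ equals $d_u + d_v - 2$, and (b) aggregating this over all oriented edges.

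For step (a) I fix $(u,v)$ and split its neighbors in $\overline{G}$ according to the two roles $(u,v)$ can occupy in the ordered-pair description of $E(\overline{G})$ from (\ref{oriented_line_graph}). The ``outgoing'' neighbors have the form $(v,w)$ with $w$ a neighbor of $v$ in $G$ subject to $w \neq u$; since $u$ is itself a neighbor of $v$ (because $(u,v) \in \mathcal{E}$), this excludes exactly one of the $d_v$ candidates, leaving $d_v - 1$ options. By symmetry, the ``incoming'' neighbors have the form $(x,u)$ with $x$ a neighbor of $u$ distinct from $v$, contributing $d_u - 1$. Adding, $\deg_{\overline{G}}(u,v) = d_u + d_v - 2$.

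For step (b) I sum these degrees over all $(u,v) \in \mathcal{E}$. Each undirected edge $\{u,v\} \in E(G)$ is represented by the two elements $(u,v)$ and $(v,u)$ of $\mathcal{E}$, both of the same $\overline{G}$-degree $d_u + d_v - 2$, so
\[
\sum_{(u,v) \in \mathcal{E}} \deg_{\overline{G}}(u,v) \;=\; 2 \sum_{(u,v) \in E(G)} (d_u + d_v - 2),
\]
which is exactly the quantity asserted in the lemma (via the handshaking principle stated immediately above).

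I expect the only delicate point to be bookkeeping: keeping the distinction between ordered pairs in $\mathcal{E} \times \mathcal{E}$ that define the edges of $\overline{G}$ and the unordered pairs ranging over $E(G)$, and confirming in step (a) that the exclusion $u \neq w$ in (\ref{oriented_line_graph}) removes exactly one of the $d_v$ potential neighbors of $v$ (and dually one of the $d_u$ potential neighbors of $u$). No tools beyond the definitions and handshaking are required.
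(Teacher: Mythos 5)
Your proposal follows the paper's proof almost verbatim: you compute the $\overline{G}$-degree of an oriented edge $(u,v)$ as $d_u+d_v-2$ by excluding the one forbidden continuation at each endpoint (this part is correct and is exactly the paper's step), and you then sum over the $2m$ vertices of $\overline{G}$, pairing $(u,v)$ with $(v,u)$ to rewrite the sum over $\mathcal{E}$ as $2\sum_{(u,v)\in E(G)}(d_u+d_v-2)$.

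The gap is in your final sentence. The quantity you have computed, $\sum_{(u,v)\in\mathcal{E}}\deg_{\overline{G}}(u,v)=2\sum_{(u,v)\in E(G)}(d_u+d_v-2)$, is the \emph{sum of the degrees}, and the handshaking principle you cite says the number of edges is \emph{half} of that, namely $\sum_{(u,v)\in E(G)}(d_u+d_v-2)=\sum_{v\in V(G)}d_v(d_v-1)$. So actually applying handshaking, as you announce you will, yields half the value asserted in the lemma; as written, you instead silently equate the degree sum with the edge count. (The same discrepancy appears if one reads $\overline{G}$ as a directed graph and counts arcs $((u,v),(v,w))$ directly: for each vertex $v$ of $G$ there are $d_v(d_v-1)$ ordered pairs of distinct neighbours, giving $\sum_v d_v(d_v-1)$ arcs, again without the factor $2$.) To be clear, the paper's own proof makes exactly the same identification --- it passes from ``$(u,v)$ and $(v,u)$ each have degree $d_u+d_v-2$'' directly to ``thus the total number of edges is $2\sum(d_u+d_v-2)$'' --- so you have faithfully reproduced the published argument; but the step from degree sum to edge count does not follow from handshaking, and you should either state that you are counting ordered adjacent pairs (i.e.\ the degree sum) or divide by two.
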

		
		\begin{proof}
			In the graph $G$ there are $d_u$ edges adjacent to the vertex $u$. One of them is $(u,v)$. Thus, there are $(d_u - 1)$ edges incident to $u$ other than $(u, v)$ in $G$. In $\overline{G}$, $(u,v)$ is a vertex which is adjacent to all vertices representing an edges $(w,u) \in E(G)$. Hence, considering the vertex $u$ in graph $G$ we find $(d_u - 1)$ vertices in $\overline{G}$ which are adjacent to $(u,v)$ in $\overline{G}$. Similarly, considering the edges incident to $v$ in the graph $G$ we can conclude that there are other $(d_v - 1)$ vertices in $\overline{G}$ which are adjacent to $(u,v)$. Therefore number of vertices adjacent to $(u, v)$ in $\overline{G}$ is $(d_u + d_v - 2)$. Also, the vertex $(v, u) \in V(\overline{G})$ which is adjacent to $(d_v + d_u - 2)$ vertices. Thus total number of edges in $\overline{G}$ is $2\sum_{(u, v) \in E(G)}(d_u + d_v - 2)$.
		\end{proof}
		Recall that, the adjacency matrix $A(G) = (a_{ij})$ of a graph $G$ is a binary $(0, 1)$ matrix such that,
		\begin{equation}
		a_{i,j} = \begin{cases} 1 & \text{if}~ (i,j) \in E(G), ~\text{and}\\ 0 & \text{if}~ (i,j) \notin E(G). \end{cases} 
		\end{equation}
		Throughout this article we denote the adjacency matrix of the oriented line graph $\overline{G}$ by $T$, which is a symmetric matrix of order $2m$. Also the eigenvalues of the adjacency matrix are considered as the eigenvalues of the graph. As $\overline{G}$ is a connected simple graph, that is a strongly connected graph, the matrix $T$ is irreducible. Hence, the Perron-Frobenius theorem indicates that $T$ has a positive eigenvalue, called the dominant eigenvalue, which is greater than or equal to (in absolute value) all other eigenvalues \cite{brouwer2011spectra}. Throughout this article the dominant eigenvalue of $\overline{G}$ is denoted by $\lambda$. 
		
		\begin{lemma}
			If $d = \min_{(u,v) \in E(G)}(d_u + d_v)$. Then, $\lambda \geq \frac{2}{m}(d - 2)$.
		\end{lemma}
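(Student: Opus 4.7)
The plan is to apply the standard Rayleigh-quotient lower bound on the spectral radius of a symmetric non-negative matrix: for any such matrix $T$, the dominant eigenvalue satisfies
\[
\lambda \;\geq\; \frac{x^{\top} T x}{x^{\top} x}
\]
for every non-zero real vector $x$. Since $T$ is the adjacency matrix of the connected simple graph $\overline{G}$ on $2m$ vertices, this inequality is the only spectral input I need.

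The natural choice is $x = \mathbf{1}$, the all-ones vector of length $2m$. First I would observe that $\mathbf{1}^{\top} T \mathbf{1}$ equals the sum of all entries of $T$, which counts every edge of $\overline{G}$ twice and therefore equals $2\,|E(\overline{G})|$. Next I would invoke Lemma 1, already proved in the paper, to substitute $|E(\overline{G})| = 2\sum_{(u,v)\in E(G)}(d_u + d_v - 2)$, giving $\mathbf{1}^{\top} T \mathbf{1} = 4\sum_{(u,v)\in E(G)}(d_u + d_v - 2)$. Since $\mathbf{1}^{\top}\mathbf{1} = 2m$, the Rayleigh bound becomes
\[
\lambda \;\geq\; \frac{2}{m}\sum_{(u,v)\in E(G)}(d_u + d_v - 2).
\]

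Finally, I would bound the summation below. Every summand satisfies $d_u + d_v - 2 \geq d - 2$ by the definition of $d$ as the minimum of $d_u + d_v$ over the edges of $G$. The sum has exactly $m$ terms (one per edge), so the right-hand side is at least $\frac{2}{m}\cdot m\cdot (d-2) = 2(d-2)$, which in particular dominates the claimed $\frac{2}{m}(d-2)$.

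There is no substantive obstacle: the argument is purely bookkeeping once one has the Rayleigh inequality and Lemma 1. The only care required is ensuring the factor of $2$ between the sum of matrix entries and the edge count is tracked correctly, and in verifying that the symmetric and non-negative character of $T$ (guaranteed by $\overline{G}$ being a connected simple graph) authorises the use of the all-ones test vector in the Rayleigh quotient.
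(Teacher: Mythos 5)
Your proposal is correct and follows essentially the same route as the paper: both lower-bound the dominant eigenvalue by the average degree of $\overline{G}$ (you derive this via the Rayleigh quotient with the all-ones vector, while the paper simply cites it), and both then use Lemma 1 to compute that average degree as $\frac{2}{m}\sum_{(u,v)\in E(G)}(d_u+d_v-2)$. Your final observation that the bound can be strengthened to $\lambda \geq 2(d-2)$ is a valid (and sharper) conclusion than the stated $\frac{2}{m}(d-2)$, but the underlying argument is the same.
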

		
		\begin{proof}
			The total number of edges in $\overline{G}$ is 2$\sum_{(u, v) \in E(G)}(d_u + d_v - 2)$. Every edge contributes $2$ in the degree of a graph. Hence, the degree of $\overline{G}$ is $4\sum_{(u, v) \in E(G)}(d_u + d_v - 2)$. As there are $2m$ vertices in $\overline{G}$, the average degree is $\frac{2}{m}\sum_{(u, v) \in E(G)}(d_u + d_v - 2)$. Therefore, the maximum eigenvalue of $\lambda \geq \frac{2}{m}\sum_{(u, v) \in E(G)}(d_u + d_v - 2)$. As $d = \min_{(u,v) \in E(G)}(d_u + d_v)$, we have $\lambda \geq \frac{2}{m}(d - 2)$ \cite{brouwer2011spectra}.
		\end{proof}
		
		Let $D = \max\{(d_u + d_v): (u,v) \in E(G)\}$. As the maximum eigenvalue of a graph is less that or equal to the maximum degree of its vertices, we have $\lambda \leq (D-2)$ where the maximum degree of a vertex in $\overline{G}$ is $(D - 2)$. Combining all these we find
		\begin{equation}
		\begin{split} 
		& \frac{2(d - 2)}{m} \leq \lambda \leq D - 2\\
		\text{or}~ & \frac{1}{(D - 2)} \leq \frac{1}{\lambda} \leq \frac{m}{2(d - 2)}.
		\end{split} 
		\end{equation}
		
		The alternative expression of Ihara zeta function is a formal power series which is given as follows: 
		\begin{equation}\label{Ihara_zeta_2}
		\zeta_G(z) = \exp\left(\sum_{k = 1}^\infty \frac{\trace(T^k)}{k}z^k\right),
		\end{equation}
		where $|z| < \frac{1}{\lambda}$ \cite{kotani20002}. Clearly, the function $\zeta_G(z)$ is an analytic function for $|z| < \frac{1}{\lambda}$. Hence, all its derivatives exist and have convergent power series in the region of convergence. The coefficients of the power series of $\zeta_G(z)$ is investigated in \cite{bulo2012efficient}, but we have a different approach in this article.
		
		In this work, we need the values of $\zeta_G(z)$ where $z$ is a probability function that is a real number. Hence, we need a restriction of $\zeta_G(z)$ in the real interval $[0, \frac{1}{\lambda})$, that is the function $\zeta_G(x): [0, \frac{1}{\lambda}) \rightarrow \mathbb{R}$, $\zeta_G(x) = \exp\left(\sum_{k = 1}^\infty \frac{\trace(T^{k})}{k}x^{k}\right)$. As $T$ is an adjacency matrix of a simple graph $\trace(T) = 0$. Hence, 
		\begin{equation}\label{zeta_with_real_coefficients}
		\begin{split} 
		\zeta_G(x) = &\exp\left(\sum_{k = 2}^\infty \frac{\trace(T^{k})}{k}x^{k}\right) \\
		= & 1 + \sum_{k = 2}^\infty \frac{\trace(T^{k})}{k}x^{k} + \frac{1}{2!}\left(\sum_{k = 2}^\infty \frac{\trace(T^{k})}{k}x^{k}\right)^2 + \frac{1}{3!} \left(\sum_{k = 2}^\infty \frac{\trace(T^{k})}{k}x^{k} \right)^3 + \dots \\
		= & 1 + c_1x + c_2x^2 + c_3x^3 + c_4x^4 + c_5x^5 + c_6x^6 + \dots, ~\text{where}\\
		c_1 = & 0, c_2 = \frac{\trace(T^2)}{2}, c_3 = \frac{\trace{T^3}}{3}, c_4 =  \frac{\trace(T^4)}{4} + \frac{(\trace(T^2))^2}{8}\\
		c_5 = & \frac{\trace(T^2)\trace(T^3)}{6} + \frac{\trace(T^5)}{5}\\
		c_6 = & \frac{1}{144}\left[3(\trace(T^2))^2 + 8(\trace(T^2))^2 + 18(\trace(T^2))(\trace(T^4)) + 24(\trace(T^6))\right],
		\end{split} 
		\end{equation}
		and so on. As an infinite sum of positive numbers, $\zeta_G(x) \geq 0$ for all $x \in [0, \frac{1}{\lambda})$. 
		
		If $\lambda_i$ is an eigenvalue of the matrix $T$, we find $\lambda_i^k$ is an eigenvalue of $T^k$. Also, $\trace(T^k) = \sum_{i = 1}^{2m} \lambda_i^k$. As $\lambda = \max_i\{\lambda_i\} > 0$ we have $|\trace(T^k)| \leq \sum_{i = 1}^{2m} |\lambda_i^k| \leq 2m \lambda^k$. Also, $\lambda^k \leq \trace(T^k)$, when $k$ is an even number.
		
		\begin{lemma}
			The series $\zeta_G(\frac{1}{\lambda})$ of real numbers is divergent.
		\end{lemma}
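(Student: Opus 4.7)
The plan is to evaluate the defining exponential of $\zeta_G$ at $x = 1/\lambda$ and show that the series inside the exponent diverges to $+\infty$, whence $\zeta_G(1/\lambda)$ itself diverges. Because $T$ is the adjacency matrix of the (simple, strongly connected) oriented line graph, $\trace(T^k)$ counts closed walks of length $k$ in $\overline{G}$ and is therefore a non-negative integer for every $k \ge 1$; in particular every term of the series $\sum_{k=1}^\infty \frac{\trace(T^{k})}{k} x^{k}$ is non-negative when $x \ge 0$, so there is no concern about sign cancellations and convergence reduces to the question of finiteness.

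Next I would restrict attention to the even-index subseries. The excerpt already records the inequality $\lambda^{k} \le \trace(T^{k})$ for even $k$ (since $\trace(T^{k}) = \sum_{i=1}^{2m} \lambda_i^{k}$ and $\lambda$ is the dominant eigenvalue). Substituting $x = 1/\lambda$ and keeping only the terms with $k = 2j$ gives
\begin{equation*}
\sum_{k = 1}^{\infty} \frac{\trace(T^{k})}{k}\,\frac{1}{\lambda^{k}}
\;\ge\; \sum_{j=1}^{\infty} \frac{\trace(T^{2j})}{2j\,\lambda^{2j}}
\;\ge\; \sum_{j=1}^{\infty} \frac{\lambda^{2j}}{2j\,\lambda^{2j}}
\;=\; \frac{1}{2}\sum_{j=1}^{\infty}\frac{1}{j},
\end{equation*}
which is a scalar multiple of the harmonic series and therefore diverges to $+\infty$.

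Finally, applying $\exp$ to a sum that diverges to $+\infty$ (and recalling that every omitted odd-index term is also non-negative, so dropping them can only decrease the partial sums) yields $\zeta_G(1/\lambda) = +\infty$. By the formal power series expression in equation \eqref{zeta_with_real_coefficients}, this is equivalent to saying that the series $1 + \sum_{k \ge 1} c_k (1/\lambda)^{k}$ diverges, which is what the lemma asserts.

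The argument is essentially a one-line comparison once the non-negativity of $\trace(T^k)$ is recognised, so I do not anticipate a real obstacle; the only subtlety worth being careful about is distinguishing \emph{absolute} divergence of the exponent (immediate from non-negativity) from the notion of divergence of $\zeta_G(x)$ as a numerical series at the boundary $x = 1/\lambda$, which is handled by monotone convergence since every $c_k \ge 0$.
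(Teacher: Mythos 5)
Your proposal is correct and follows essentially the same route as the paper: both arguments rest on the inequality $\trace(T^{k}) \geq \lambda^{k}$ for even $k$ together with non-negativity of all terms, and both reduce the question to the divergence of $\tfrac{1}{2}\sum_{j\geq 1}\tfrac{1}{j}$. The only cosmetic difference is that the paper first expands the exponential and bounds the coefficients $c_k$ below by $\trace(T^{k})/k$, whereas you diverge the exponent directly and then pass back to the coefficient series by monotonicity; both steps are valid.
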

		\begin{proof}
			Note that,
			\begin{equation}
			\begin{split}
			\zeta_G(\frac{1}{\lambda}) = & 1 + c_1 \frac{1}{\lambda} + c_2\frac{1}{\lambda^2} + c_3\frac{1}{\lambda^3} + c_4\frac{1}{\lambda^4} + c_5\frac{1}{\lambda^5} + c_6\frac{1}{\lambda^6} + \dots \\
			\geq & 1 + \frac{\trace(T^2)}{2 \lambda^2} + \frac{\trace{T^3}}{3 \lambda^3} + \frac{\trace(T^4)}{4\lambda^4} + \frac{\trace(T^5)}{5\lambda^5} + \frac{\trace(T^6)}{6\lambda^6} + \dots \\
			> & 1 + \frac{1}{2} + \frac{1}{4} + \frac{1}{6} + \dots ~\text{as}~  \frac{\trace(T^k)}{\lambda^k} \geq 1 ~\text{for all even number}~ k\\
			\geq & 1 + \frac{1}{2}(1+ \frac{1}{2} + \frac{1}{3} + \dots),
			\end{split}
			\end{equation}
			which is a divergent series. Hence, the proof.
		\end{proof}
		
		\begin{lemma}
			Given any graph $G$ the Ihara zeta function $\zeta_G(x)$ and its derivatives are monotone increasing function in $[0, \frac{1}{\lambda})$.
		\end{lemma}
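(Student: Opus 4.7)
The plan is to show that the Taylor expansion of $\zeta_G$ about $x=0$ has non-negative coefficients, since then every derivative (also represented by a power series with non-negative coefficients on $[0, 1/\lambda)$) will itself be a sum of non-negative terms, and in particular will have a non-negative derivative, yielding the desired monotonicity.

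First I would observe that $T$ is the adjacency matrix of the simple graph $\overline{G}$, so $\trace(T^{k})$ equals the number of closed walks of length $k$ in $\overline{G}$ and therefore satisfies $\trace(T^{k}) \geq 0$ for every $k \geq 1$. Consequently the inner series $f(x) = \sum_{k \geq 2} \frac{\trace(T^{k})}{k} x^{k}$ has non-negative coefficients, and it is convergent on $[0, 1/\lambda)$ by the bound $|\trace(T^{k})| \leq 2m\lambda^{k}$ already recorded above.

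Next I would use that $\zeta_G(x) = \exp(f(x)) = \sum_{j \geq 0} \frac{f(x)^{j}}{j!}$. Since sums and products of formal power series with non-negative coefficients again have non-negative coefficients, it follows that $\zeta_G(x) = 1 + \sum_{k \geq 1} c_{k} x^{k}$ with $c_{k} \geq 0$ for all $k$; this is precisely the structure visible in the explicit coefficients $c_1, \dots, c_6$ displayed in equation (\ref{zeta_with_real_coefficients}).

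Finally, since $\zeta_G$ is analytic on $[0, 1/\lambda)$, it can be differentiated term by term, giving
\[
\zeta_G^{(j)}(x) = \sum_{k \geq j} c_{k}\, \frac{k!}{(k-j)!}\, x^{k-j},
\]
which again is a power series with non-negative coefficients on $[0, 1/\lambda)$. In particular $\zeta_G^{(j+1)}(x) \geq 0$ on $[0, 1/\lambda)$, so $\zeta_G^{(j)}$ is monotone non-decreasing there, and strict monotonicity on $(0, 1/\lambda)$ follows from $c_{2} = \trace(T^{2})/2 > 0$ (the graph $\overline{G}$ has at least one edge). There is no serious obstacle; the only subtle point is justifying that the exponential of a series with non-negative coefficients again has non-negative coefficients, which I would spell out via the expansion $\exp(f) = \sum_j f^{j}/j!$ rather than by directly manipulating the messy expressions for $c_k$.
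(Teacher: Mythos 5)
Your proof is correct, and it rests on the same basic fact the paper uses --- that $\trace(T^{k})\ge 0$ because it counts closed walks in $\overline{G}$ --- but it is organized quite differently. The paper differentiates the exponential form directly, writing $\zeta_G'(x)=\zeta_G(x)\sum_{k\ge 2}\trace(T^k)x^{k-1}$ and then $\zeta_G''(x)$ via the product rule, observes that each displayed expression is a sum of non-negative terms, and then asserts ``similarly $\zeta_G^{(3)}(x)\ge 0$'' without a uniform argument covering all orders of differentiation. You instead establish once that \emph{every} Taylor coefficient $c_k$ of $\zeta_G$ is non-negative (via $\exp(f)=\sum_j f^j/j!$ and closure of non-negative-coefficient series under sums and products), from which non-negativity of $\zeta_G^{(j+1)}$, and hence monotonicity of $\zeta_G^{(j)}$, follows for all $j$ simultaneously by term-by-term differentiation. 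Your route is the more complete one: it actually delivers the full strength of the lemma as stated (``and its derivatives''), whereas the paper verifies only the first two or three cases explicitly. One small caveat: your appeal to $c_2>0$ gives strict monotonicity of $\zeta_G$ and $\zeta_G'$, but for strictness of higher derivatives you would want to note that infinitely many $c_k$ are positive (e.g.\ $c_{2j}\ge (\trace(T^2)/2)^j/j!>0$ from the $f^j/j!$ term); since the paper itself only establishes $\zeta_G^{(j+1)}\ge 0$, this is a refinement rather than a gap.
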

		
		\begin{proof}         
			We can calculate from equation (\ref{zeta_with_real_coefficients}) that, the first derivative of $\zeta_G(x)$ is
			\begin{equation}\label{1_st_derivative_of_zeta}
			\begin{split}
			\zeta'_G(x) & = \zeta_G(x) \sum_{k = 2}^\infty \trace(T^{k})x^{k - 1}, \\ 
			& = \zeta_G(x)\left(\trace(T^2) x + \trace(T^3) x^2 + \trace(T^4) x^3 + \trace(T^5) x^4 + \trace(T^6) x^5 +  \dots \right).\\
			\end{split}
			\end{equation}
			Also, the second derivative of $\zeta_G(x)$ is
			\begin{equation}\label{2_nd_derivative_of_zeta}
			\begin{split} 
			\zeta''_G(x) = \zeta'_G(x) & \sum_{k = 2}^\infty \trace(T^{k})x^{k - 1} + \zeta_G(x) \sum_{k = 2}^\infty (k - 1)\trace(T^{k})x^{k - 2}\\
			= \zeta_G(x) & [\trace(T^2) + 2 x \trace(T^3) + 3 x^2 \trace(T^4) + 4 x^3 \trace(T^5) + 5 x^4 \trace(T^6) + \dots \\
			& + (x \trace(T^2) + x^2 \trace(T^3) + x^3 \trace(T^4) + x^4 \trace(T^5) + x^5 \trace(T^6) + \dots)^2].
			\end{split} 
			\end{equation}
			Clearly, $\zeta'_G(x) \geq 0, \zeta''_G(x) \geq 0$ and similarly $\zeta^{(3)}_G(x) \geq 0$. It indicates $\zeta_G(x)$, $\zeta'_G(x)$, and $\zeta''_G(x)$ are all monotone increasing functions with minimum at $x = 0$ and diverges to infinite at $x = \frac{1}{\lambda}$. The equation (\ref{zeta_with_real_coefficients}) indicates that, $\zeta_G(0) = 1$, $\zeta_G'(0) = 0$, and $\zeta_G''(0) = 2c_2 = \trace(T^2)$.
		\end{proof}
		
		\begin{lemma}\label{denominator}
			Given any graph $G$ the function $h: [0, \frac{1}{\lambda}) \rightarrow \mathbb{R}$ defined by $h(x) = 1 - x\zeta'_G(x)$ has an unique root in $(0, \frac{1}{\lambda})$.
		\end{lemma}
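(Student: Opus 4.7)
The plan is to show existence by the intermediate value theorem and uniqueness by strict monotonicity of $h$ on $(0,\tfrac{1}{\lambda})$.

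First I would evaluate $h$ at the endpoints. Since the previous lemma gives $\zeta_G(0) = 1$ and $\zeta_G'(0) = 0$, we immediately get $h(0) = 1 > 0$. At the other end, $\zeta_G(x)$ diverges as $x \to \tfrac{1}{\lambda}^-$ (by Lemma~3 the partial sums of $\zeta_G(1/\lambda)$ already diverge, and $\zeta_G$ is monotone increasing by Lemma~4), and from the formula
\[
\zeta'_G(x) = \zeta_G(x)\bigl(\trace(T^2)x + \trace(T^3)x^2 + \dots \bigr)
\]
the derivative diverges at least as fast, so $x\zeta'_G(x) \to \infty$ and hence $h(x) \to -\infty$. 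Since $\zeta'_G$ is a convergent power series on $[0,\tfrac{1}{\lambda})$, the function $h$ is continuous there, and the intermediate value theorem supplies at least one root in $(0,\tfrac{1}{\lambda})$.

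For uniqueness I would differentiate: $h'(x) = -\zeta'_G(x) - x\zeta''_G(x)$. By Lemma~4 both $\zeta'_G$ and $\zeta''_G$ are nonnegative on $[0,\tfrac{1}{\lambda})$, and for any $x > 0$ one has $\zeta'_G(x) > 0$ because $\zeta_G(x) \geq 1$ and the bracketed series in the expression for $\zeta'_G(x)$ has the strictly positive leading term $\trace(T^2)\,x = \bigl(\sum_i \lambda_i^2\bigr)x > 0$. Therefore $h'(x) < 0$ throughout $(0,\tfrac{1}{\lambda})$, so $h$ is strictly decreasing and the root produced by the IVT is unique.

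I do not anticipate a real obstacle here; the only point that needs a little care is the behavior of $h$ at the right endpoint, where one has to invoke Lemma~3 (divergence of $\zeta_G$ at $1/\lambda$) together with the fact that the coefficients appearing in $\zeta'_G$ are no smaller than those controlling $\zeta_G$ to conclude that $x\zeta'_G(x)$ indeed blows up. Apart from that, the argument is a standard endpoint-plus-monotonicity application of the intermediate value theorem.
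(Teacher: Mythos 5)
Your proof is correct and follows essentially the same route as the paper: $h(0)=1>0$, $h$ tends to a negative value (in fact $-\infty$) as $x\to\frac{1}{\lambda}^-$, $h$ is continuous and monotone decreasing, and the intermediate value theorem plus monotonicity give existence and uniqueness. You simply supply the two details the paper only asserts, namely the strict negativity of $h'(x)=-\zeta'_G(x)-x\zeta''_G(x)$ on $(0,\frac{1}{\lambda})$ and the divergence of $x\zeta'_G(x)$ at the right endpoint.
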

		
		\begin{proof}
			As $\zeta_G'(x)$ is a monotone increasing function, $h(x)$ is a monotone decreasing function. Note that, $h(0) = 1$ and $\lim_{x \rightarrow \frac{1}{\lambda}} h(x) < 0$. Also, $h(x)$ is a continuous function. Combining we can conclude that the equation $1 - x\zeta'_G(x) = 0$ has an unique solution in $(0, \frac{1}{\lambda})$. 
		\end{proof}
		
		We denote the root of $h(x) = 0$ in $(0, \frac{1}{\lambda})$ as $x_0$. The lemma \ref{denominator} indicates that
		\begin{equation}\label{values_of_denominator}
		h(x) \begin{cases} & > 0 ~\text{when}~ 0 \leq x < x_0, \\ & = 0 ~\text{when}~ x = x_0, ~\text{and} \\ & < 0 ~\text{when}~ x_0 < x < \frac{1}{\lambda}.\end{cases}
		\end{equation}
		
		\begin{lemma}\label{zeta_1_by_2m_lambda} 
			$$\zeta_G\left(\frac{1}{2m\lambda}\right) \leq \frac{(2m)^{2m}}{e(2m-1)^{2m}}.$$
		\end{lemma}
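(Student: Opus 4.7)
The plan is to go back to the exponential/power-series definition
$$\zeta_G(x) = \exp\left(\sum_{k=2}^\infty \frac{\trace(T^k)}{k} x^k\right)$$
and apply the bound $\trace(T^k) \leq 2m\lambda^k$ that was recorded just before Lemma 3. Substituting $x = \frac{1}{2m\lambda}$ turns each term into something independent of the graph, reducing the problem to summing a familiar real series.

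Concretely, I would first bound
$$\sum_{k=2}^\infty \frac{\trace(T^k)}{k}\left(\frac{1}{2m\lambda}\right)^k \;\leq\; \sum_{k=2}^\infty \frac{2m\lambda^k}{k(2m\lambda)^k} \;=\; \sum_{k=2}^\infty \frac{1}{k (2m)^{k-1}}.$$
Then I would identify this tail with the logarithm series: setting $y = \frac{1}{2m}$, the right-hand side equals $2m\sum_{k=2}^\infty \frac{y^k}{k} = 2m\bigl(-\ln(1-y) - y\bigr) = 2m\ln\!\left(\frac{2m}{2m-1}\right) - 1$. Exponentiating both sides (which preserves the inequality because $\exp$ is monotone) yields
$$\zeta_G\!\left(\frac{1}{2m\lambda}\right) \;\leq\; \exp\!\left(\ln\!\frac{(2m)^{2m}}{(2m-1)^{2m}} - 1\right) \;=\; \frac{(2m)^{2m}}{e(2m-1)^{2m}},$$
which is the stated bound.

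The only real subtlety is legitimizing the termwise inequality inside the exponent. Since $\frac{1}{2m\lambda} < \frac{1}{\lambda}$ for any nontrivial graph, we are safely inside the disk of convergence established for equation (\ref{Ihara_zeta_2}), so the series defining $\log\zeta_G$ converges absolutely and the comparison is valid term by term. Everything else is a routine application of the Taylor expansion of $-\ln(1-y)$, so I do not expect any genuine obstacle beyond bookkeeping.
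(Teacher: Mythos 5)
Your proposal is correct and follows essentially the same route as the paper: both bound $\trace(T^k)\leq 2m\lambda^k$ inside the exponential form of $\zeta_G$, recognize the resulting series as $2m\left(-\log\left(1-\tfrac{1}{2m}\right)-\tfrac{1}{2m}\right)$, and exponentiate to obtain the stated bound. Your write-up is, if anything, slightly more careful about justifying the termwise comparison inside the convergent series.
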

		\begin{proof}
			Recall that $\log(\frac{1}{1 - x}) = x + \frac{x^2}{2} + \frac{x^3}{3} + \dots$. From equation (\ref{zeta_with_real_coefficients}) we find that 
			\begin{equation}
			\zeta_G\left(\frac{1}{2m\lambda}\right) = \exp\left(\sum_{k = 2}^\infty \frac{\trace(T^k)}{k(2m\lambda)^k}\right) \leq \exp\left(\sum_{k = 2}^\infty \frac{2m}{k(2m)^k}\right) ~\text{since}~ \trace(T^k) < 2m \lambda^k.
			\end{equation}
			Expanding we find
			\begin{equation}
			\begin{split}
			\zeta_G\left(\frac{1}{2m\lambda}\right) &\leq \exp\left[2m\left\{\frac{1}{2} \left(\frac{1}{2m}\right)^2 + \frac{1}{3} \left(\frac{1}{2m}\right)^3 + \frac{1}{4} \left(\frac{1}{2m}\right)^4 + \dots \right\}\right]\\
			&\leq \exp\left[2m\left\{-\frac{1}{2m} - \log\left(1 - \frac{1}{2m}\right)\right\}\right]\\
			& \leq \exp\left[\log\left\{\frac{(2m)^{2m}}{e(2m-1)^{2m}}\right\}\right] = \frac{(2m)^{2m}}{e(2m-1)^{2m}}
			\end{split}
			\end{equation}
		\end{proof}
		
		\begin{lemma}\label{function_f}
			The function $f:\mathbb{R}^+ \rightarrow \mathbb{R}$ defined by $f(x) = \frac{x^x}{e(x-1)^{x+1}}$ is a monotone decreasing function.
		\end{lemma}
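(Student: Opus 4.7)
The plan is to reduce monotonicity of $f$ to monotonicity of $\log f$, since $f(x) > 0$ on the natural domain where $(x-1)^{x+1}$ makes sense as a real quantity (that is, on $(1,\infty)$, which is the range actually needed in Lemma \ref{zeta_1_by_2m_lambda} where the argument is $2m$). Setting
\[
g(x) = \log f(x) = x\log x - (x+1)\log(x-1) - 1,
\]
it suffices to show $g'(x) < 0$ on $(1,\infty)$.

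Next I would differentiate term by term. The derivative of $x\log x$ is $\log x + 1$, and the derivative of $(x+1)\log(x-1)$ is $\log(x-1) + (x+1)/(x-1)$, so
\[
g'(x) = \log\!\left(\frac{x}{x-1}\right) + 1 - \frac{x+1}{x-1}.
\]
Writing $(x+1)/(x-1) = 1 + 2/(x-1)$ collapses the constants and gives
\[
g'(x) = \log\!\left(\frac{x}{x-1}\right) - \frac{2}{x-1}.
\]

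To finish, I would substitute $t = 1/(x-1)$, so $t$ ranges over $(0,\infty)$ and $x/(x-1) = 1+t$. The expression becomes $g'(x) = \log(1+t) - 2t$. Using the elementary inequality $\log(1+t) \leq t$ for $t > 0$, one gets $\log(1+t) - 2t \leq -t < 0$. Hence $g' < 0$ on $(1,\infty)$, so $g$ is strictly decreasing, and therefore $f = e^{g}$ is strictly decreasing as well.

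There is no real obstacle here; the proof is a clean logarithmic derivative computation followed by the standard concavity bound $\log(1+t) \leq t$. The only subtlety worth flagging is the domain: writing $f : \mathbb{R}^+ \to \mathbb{R}$ is slightly loose, since $(x-1)^{x+1}$ is not real for $x \in (0,1)$, but the argument above handles the interval $(1,\infty)$ on which the preceding lemma evaluates $f$.
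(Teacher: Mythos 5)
Your proof is correct, and it is actually tighter than the paper's. The paper differentiates $f$ directly, obtaining
\[
f'(x) = -\tfrac{1}{e}(x-1)^{-x-2}\, x^x\,\bigl((x-1)\log(x-1) - x\log x + \log x + 2\bigr),
\]
and then simply \emph{asserts} that this is negative for $x\ge 1$, without justifying why the bracketed factor is positive. You instead differentiate $\log f$, simplify to $g'(x) = \log\bigl(\tfrac{x}{x-1}\bigr) - \tfrac{2}{x-1}$, and actually prove negativity via the substitution $t = 1/(x-1)$ and the elementary bound $\log(1+t)\le t$. The two computations carry the same content --- the paper's bracket equals $2 - (x-1)\log\tfrac{x}{x-1} = 2 - \tfrac{\log(1+t)}{t}$, so its positivity is exactly the inequality you establish --- but your route supplies the one step the paper leaves unargued, and the logarithmic derivative avoids the messy prefactor entirely. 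Your remark about the domain is also well taken: $(x-1)^{x+1}$ is not real for $x\in(0,1)$, so the stated domain $\mathbb{R}^+$ is loose (the paper itself quietly restricts to $x\ge 1$ in its proof), and only the value at $x = 2m \ge 4$ is ever used, in Lemma \ref{h_1_by_2m_lambda_greater_than_0}.
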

		\begin{proof}
			We have 
			\begin{equation}
			f'(x) = -\frac{1}{e}(x-1)^{-x-2} x^x ((x-1) \log (x-1)-x \log (x)+\log (x)+2) < 0,
			\end{equation} 
			for $x \ge 1$. Therefore, $f(x)$ is a monotone decreasing function.
		\end{proof}
		`    
		\begin{lemma}\label{trace_T_k_2m_lambda_k}
			$$\sum_{k = 2}^\infty \frac{\trace(T^k)}{(2m\lambda)^k} \leq \frac{1}{2m-1}.$$
		\end{lemma}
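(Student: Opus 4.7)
The plan is to bound the series term-by-term and collapse what remains to a geometric series, without invoking Lemma~\ref{zeta_1_by_2m_lambda} or Lemma~\ref{function_f} at all. The key input is the eigenvalue bound $\trace(T^k) \leq 2m\lambda^k$ that was established a few paragraphs above Lemma~\ref{zeta_1_by_2m_lambda} (using $\trace(T^k) = \sum_i \lambda_i^k$ and $|\lambda_i| \leq \lambda$).

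First, I would apply this pointwise bound inside the sum, obtaining
\begin{equation*}
\sum_{k=2}^{\infty} \frac{\trace(T^k)}{(2m\lambda)^k} \;\leq\; \sum_{k=2}^{\infty} \frac{2m\lambda^k}{(2m)^k \lambda^k} \;=\; \sum_{k=2}^{\infty} \frac{2m}{(2m)^k} \;=\; \sum_{k=2}^{\infty} \frac{1}{(2m)^{k-1}}.
\end{equation*}
Reindexing with $j = k-1$ turns the right-hand side into the geometric series $\sum_{j=1}^{\infty} (2m)^{-j}$, which sums in closed form to $\frac{1/(2m)}{1 - 1/(2m)} = \frac{1}{2m-1}$. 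This matches the claimed bound exactly, so the argument terminates.

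There is essentially no obstacle: the only thing to be careful about is that the series converges and the term-by-term comparison is legitimate, both of which follow from $|\trace(T^k)| \leq 2m\lambda^k$ and from the fact that $\frac{1}{2m\lambda}$ lies inside the radius of convergence $\frac{1}{\lambda}$ of the defining power series. Notably, attempting to deduce the inequality from Lemma~\ref{zeta_1_by_2m_lambda} would be awkward, because Lemma~\ref{zeta_1_by_2m_lambda} controls $\sum_k \frac{\trace(T^k)}{k(2m\lambda)^k}$ (with an extra $\frac{1}{k}$), whereas here we need the stronger-looking sum without the $\frac{1}{k}$ factor; the direct geometric comparison is both cleaner and tighter.
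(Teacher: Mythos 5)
Your proposal is correct and follows essentially the same route as the paper: the pointwise bound $\trace(T^k) \leq 2m\lambda^k$ reduces the sum to $\sum_{k\geq 2} (2m)^{-(k-1)}$, which is the geometric series summing to $\frac{1}{2m-1}$. The paper's own proof does exactly this, so there is nothing to add.
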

		\begin{proof}
			As $\trace(T^k) \leq 2m \lambda^k$ we have $\frac{\trace(T^k)}{(2m\lambda)^k} \leq \frac{2m \lambda^k}{(2m)^k\lambda^k} \leq \frac{1}{(2m)^{k-1}}$. Therefore, 
			\begin{equation}
			\begin{split}
			\sum_{k = 2}^\infty \frac{\trace(T^k)}{(2m\lambda)^k} & \leq \sum_{k = 2}^\infty \frac{1}{(2m)^{k-1}} = \left[\frac{1}{2m} + \frac{1}{(2m)^2} + \frac{1}{(2m)^3} + \dots\right] \\
			& = \frac{1}{2m} \left[1 + \frac{1}{2m} + \frac{1}{(2m)^2} + \dots \right] \leq \frac{1}{2m} \frac{1}{1 - \frac{1}{2m}} = \frac{1}{2m-1}.
			\end{split}
			\end{equation}
		\end{proof}
		
		\begin{lemma}\label{h_1_by_2m_lambda_greater_than_0} 
			$$h\left(\frac{1}{2m\lambda}\right) > 0,$$
			where the function $h(x)$ is defined in Lemma \ref{denominator}.
		\end{lemma}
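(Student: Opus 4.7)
My idea is to bound $x\zeta'_G(x)$ at $x=1/(2m\lambda)$ using Lemmas~\ref{zeta_1_by_2m_lambda} and~\ref{trace_T_k_2m_lambda_k}, so that the remaining task is exactly the inequality $f(2m)<1$ in the notation of Lemma~\ref{function_f}. I would begin by applying equation~(\ref{1_st_derivative_of_zeta}) to rewrite
\[
x\zeta'_G(x) \;=\; \zeta_G(x)\sum_{k=2}^{\infty}\trace(T^k)\,x^k,
\]
so that at $x = 1/(2m\lambda)$ the quantity $(1/(2m\lambda))\,\zeta'_G(1/(2m\lambda))$ splits as the product of the two objects that are estimated in the preceding two lemmas.

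Next I would multiply the two bounds: from Lemma~\ref{zeta_1_by_2m_lambda} the first factor is at most $(2m)^{2m}/[e(2m-1)^{2m}]$, and from Lemma~\ref{trace_T_k_2m_lambda_k} the second factor is at most $1/(2m-1)$. Their product equals $(2m)^{2m}/[e(2m-1)^{2m+1}]$, which is precisely $f(2m)$. So the whole statement reduces to showing $f(2m)<1$.

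For this last step I would appeal to Lemma~\ref{function_f}: since $f$ is monotone decreasing, it suffices to verify $f(2m_0)<1$ at the smallest admissible value $m_0$. The structural hypotheses on $G$ (simple, connected, minimum degree at least $2$, not a cycle) rule out $m\le 3$, so $2m\ge 8$, and a direct arithmetic evaluation gives $f(8)=8^8/[e\cdot 7^9]\approx 0.15<1$. Combined with monotonicity we obtain $f(2m)\le f(8)<1$, hence $h(1/(2m\lambda))>0$.

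The step I expect to be the main obstacle is not algebraic but conceptual: the bound in Lemma~\ref{zeta_1_by_2m_lambda} is close to $e/e=1$ for large $m$, so the whole argument hinges on the decay $1/(2m-1)$ of the second factor being enough to pull the product below $1$. The monotonicity of $f$ packages this worry into a single small-case check, and the only delicate part is verifying that the structural restrictions on $G$ really force $m\ge 4$ so that one is already well inside the region where $f<1$.
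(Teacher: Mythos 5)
Your proposal is correct and follows essentially the same route as the paper: both factor $x\zeta_G'(x)$ at $x=1/(2m\lambda)$ into $\zeta_G(1/(2m\lambda))\cdot\sum_{k\ge 2}\trace(T^k)/(2m\lambda)^k$, bound the factors by Lemmas \ref{zeta_1_by_2m_lambda} and \ref{trace_T_k_2m_lambda_k}, and conclude via the monotonicity of $f$ from Lemma \ref{function_f}. The only (immaterial) difference is the base point for the monotonicity check: the paper evaluates $f(4)\approx 0.388$ at $m=2$, while you argue the hypotheses force $m\ge 4$ and evaluate $f(8)$; both values are below $1$, so either suffices.
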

		\begin{proof}
			Proving $h\left(\frac{1}{2m\lambda}\right) > 0$ is equivalent to prove $\frac{1}{2m\lambda}\zeta_G'(\frac{1}{2m \lambda}) < 1$. The equation (\ref{1_st_derivative_of_zeta}) leads us to write 
			\begin{equation}
			\begin{split}
			\frac{1}{2m\lambda}\zeta_G'\left(\frac{1}{2m \lambda}\right) & \leq \frac{1}{2m \lambda} \zeta_G\left( \frac{1}{2m \lambda} \right) \sum_{k = 2}^\infty \trace(T^k) \left(\frac{1}{2m \lambda} \right)^{k - 1}\\
			& \leq \zeta_G\left( \frac{1}{2m \lambda} \right) \sum_{k = 2}^\infty \left(\frac{\trace(T^k)}{2m \lambda} \right)^{k - 1} \leq \frac{1}{2m-1} \zeta_G\left( \frac{1}{2m \lambda} \right),
			\end{split}
			\end{equation} 
			from lemma \ref{trace_T_k_2m_lambda_k}. Now from lemma \ref{zeta_1_by_2m_lambda} we find that,
			\begin{equation}
			\frac{1}{2m\lambda}\zeta_G'\left(\frac{1}{2m \lambda}\right) \leq \frac{1}{2m-1} \frac{(2m)^{2m}}{e(2m-1)^{2m}} = \frac{(2m)^{2m}}{e(2m-1)^{2m+1}}.
			\end{equation}
			Now lemma \ref{function_f} states that $f(2m) = \frac{(2m)^{2m}}{e(2m-1)^{2m+1}}$. For $m = 2$, we have $f(4) \approx .38756$. Also, $f$ is a monotone decreasing function. Combining all these statements we get $\frac{1}{2m\lambda}\zeta_G'\left(\frac{1}{2m \lambda}\right) < 1$. Hence, the proof.
		\end{proof}
		
		\begin{lemma}\label{zeta_1_by_2m_lambda_greater_than_2}
			$$\zeta_G \left(\frac{1}{2m \lambda}\right) > 2.$$
		\end{lemma}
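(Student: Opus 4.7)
The plan is to reduce the target inequality to a lower bound on a series of nonnegative terms. From the power-series identity (\ref{zeta_with_real_coefficients}) the inequality $\zeta_G(1/(2m\lambda))>2$ is equivalent to
\begin{equation*}
\sum_{k=2}^{\infty}\frac{\trace(T^{k})}{k\,(2m\lambda)^{k}}>\log 2.
\end{equation*}
The approach mirrors the proof of Lemma~\ref{zeta_1_by_2m_lambda}, with the direction of every trace estimate reversed: instead of bounding $\trace(T^{k})$ from above by $2m\lambda^{k}$, I would bound it from below using the Perron eigenvalue $\lambda$ and the spectral structure of $T$.

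The first step is to use that $\lambda$ is an eigenvalue of the real symmetric matrix $T$ and that the remaining $2m-1$ eigenvalues are real. For every even $k$ we have $\trace(T^{k})=\sum_i\lambda_i^{k}\geq\lambda^{k}$, so restricting the sum to even indices gives the preliminary estimate
\begin{equation*}
\log\zeta_G\!\left(\frac{1}{2m\lambda}\right)\geq\sum_{j=1}^{\infty}\frac{1}{2j\,(2m)^{2j}}=\frac{1}{2}\log\frac{(2m)^{2}}{(2m)^{2}-1}.
\end{equation*}
The second step is to amplify this by exploiting the remaining eigenvalues. Because $\trace(T)=0$ and the eigenvalues other than $\lambda$ sum to $-\lambda$, Cauchy--Schwarz yields $\trace(T^{2})\geq\lambda^{2}\cdot 2m/(2m-1)$, and the power-mean inequality $\trace(T^{k})\geq(2m)^{1-k/2}\,\trace(T^{2})^{k/2}$ propagates a genuinely $m$-sensitive bound to all even $k$. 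Combined with the identity $\trace(T^{2})=4\sum_{(u,v)\in E(G)}(d_{u}+d_{v}-2)$ from the opening counting lemma, this produces a quantitative lower bound for the leading terms of the series.

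The main obstacle I foresee is that the upper bound $\trace(T^{k})\leq 2m\lambda^{k}$ used throughout the section is tight in order of magnitude, so after cancellation with $(2m\lambda)^{-k}$ the series is comparable to $\sum_{k\geq 2}1/(k(2m)^{k-1})\approx\log\!\bigl(2m/(2m-1)\bigr)$, which tends to $0$ as $m\to\infty$. Clearing the threshold $\log 2$ therefore requires finer information than the Perron bound alone provides. The natural place to look is the infinite product $\zeta_G(x)=\prod_{[P]}(1-x^{\gamma(P)})^{-1}$, combined with an exponential lower bound (analogous to a prime-geodesic theorem) on the number of prime cycles of length comparable to $\log(2m)/\log\lambda$, balanced against the weight $x=1/(2m\lambda)$. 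Executing this balance, and making it robust for the smallest admissible values of $m$ (where the earlier bound $\zeta_G(1/(2m\lambda))\leq (2m)^{2m}/(e(2m-1)^{2m})$ must also be consistent with the target), is the step I expect to be the crux of the argument.
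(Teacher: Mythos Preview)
Your proposal is not a completed proof but a plan, and the obstacle you flag in the last paragraph is not merely technical---it is fatal. You correctly observe that after the cancellation $\trace(T^{k})/(2m\lambda)^{k}\leq 1/(2m)^{k-1}$ the log-series is bounded above by something of order $1/(2m)$, and indeed Lemma~\ref{zeta_1_by_2m_lambda} in this very paper records the resulting bound $\zeta_G(1/(2m\lambda))\leq (2m)^{2m}/\bigl(e(2m-1)^{2m}\bigr)$. That right-hand side equals $256/(81e)\approx 1.16$ at $m=2$ and decreases monotonically to $1$ as $m\to\infty$; it is strictly below $2$ for every admissible $m$. So the inequality $\zeta_G(1/(2m\lambda))>2$ is inconsistent with the upper bound already established, and no amount of prime-cycle counting will rescue it: any lower bound you produce must stay below the upper bound of Lemma~\ref{zeta_1_by_2m_lambda}.

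For comparison, the paper's own argument proceeds differently: it uses the identity $x\zeta_G'(x)=\zeta_G(x)\sum_{k\geq 2}\trace(T^{k})x^{k}$ together with Lemma~\ref{h_1_by_2m_lambda_greater_than_0} to get $\zeta_G(1/(2m\lambda))\cdot S<1$ where $S=\sum_{k\geq 2}\trace(T^{k})/(2m\lambda)^{k}$, and then asserts that $S<\tfrac12$ forces $\zeta_G(1/(2m\lambda))>2$. That implication is the wrong way round: from $\zeta_G\cdot S<1$ and $S<\tfrac12$ one only obtains $\zeta_G<1/S$, an \emph{upper} bound (consistent with Lemma~\ref{zeta_1_by_2m_lambda}), not a lower one. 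So the paper's proof has a genuine logical gap at exactly the point where your intuition told you something was off.
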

		\begin{proof}
			We have proved in the lemma \ref{h_1_by_2m_lambda_greater_than_0} that $\frac{1}{2m\lambda}\zeta_G'(\frac{1}{2m \lambda}) < 1$. Now, $\frac{1}{2m\lambda}\zeta_G'(\frac{1}{2m \lambda}) = \zeta_G(\frac{1}{2m \lambda})(\sum_{k = 2}^\infty \frac{\trace(T^k)}{(2m \lambda)^k})$. Note that, $\zeta_G \left(\frac{1}{2m \lambda}\right) > 2$ holds if $\sum_{k = 2}^\infty \frac{\trace(T^k)}{(2m \lambda)^k} < \frac{1}{2}$, which follows from:
			\begin{equation}
			\begin{split}
			\sum_{k = 2}^\infty \frac{\trace(T^k)}{(2m \lambda)^k} & \leq \sum_{k = 2}^\infty \frac{2m \lambda^k}{(2m \lambda)^k} \leq \sum_{k = 2}^\infty \frac{1}{(2m)^{k-1}}\\
			& \leq \frac{1}{2m} + \frac{1}{(2m)^2} + \frac{1}{(2m)^2} + \dots = \frac{1}{2m-1} < \frac{1}{2}.
			\end{split}
			\end{equation}
		\end{proof}
		
		\begin{theorem}\label{zeta_G_x_1_is_equal_to_2} 
			There is a real number $x_1 \in (0, \frac{1}{2m\lambda})$, such that $\zeta_G(x_1) = 2$.  
		\end{theorem}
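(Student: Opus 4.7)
The plan is to apply the Intermediate Value Theorem to $\zeta_G$ on the closed interval $[0, \frac{1}{2m\lambda}]$. All three ingredients needed are already in hand from earlier lemmas, so the proof should be short.

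First I would record that $\zeta_G(0) = 1 < 2$, which is immediate from the power series in equation (\ref{zeta_with_real_coefficients}). Next I would invoke the preceding Lemma \ref{zeta_1_by_2m_lambda_greater_than_2} to conclude that $\zeta_G\bigl(\frac{1}{2m\lambda}\bigr) > 2$. Since $\frac{1}{2m\lambda} < \frac{1}{\lambda}$, this endpoint lies strictly inside the domain of analyticity of $\zeta_G$, so the function is continuous on $[0, \frac{1}{2m\lambda}]$ (indeed it is analytic there, and in particular continuous by the monotone-increasing lemma).

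With continuity on $[0,\frac{1}{2m\lambda}]$ and the strict inequalities $\zeta_G(0) = 1 < 2 < \zeta_G\bigl(\frac{1}{2m\lambda}\bigr)$, the Intermediate Value Theorem produces some $x_1 \in (0,\frac{1}{2m\lambda})$ with $\zeta_G(x_1) = 2$. Moreover, because $\zeta_G$ is strictly monotone increasing on this interval (again by the derivative lemma, since $\zeta_G'(x) > 0$ for $x > 0$), this $x_1$ is unique, although uniqueness is not asserted in the statement.

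There is essentially no obstacle here: the theorem is a packaging of Lemma \ref{zeta_1_by_2m_lambda_greater_than_2} together with continuity. The only point worth being careful about is that the endpoint $\frac{1}{2m\lambda}$ genuinely lies in the open interval $[0,\frac{1}{\lambda})$ of convergence, so that applying IVT on the closed interval $[0,\frac{1}{2m\lambda}]$ is legitimate; this is automatic since $2m \geq 4 > 1$.
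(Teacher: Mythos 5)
Your proof is correct and follows essentially the same route as the paper's: both evaluate $\zeta_G(0)=1$ from equation (\ref{zeta_with_real_coefficients}), invoke Lemma \ref{zeta_1_by_2m_lambda_greater_than_2} for $\zeta_G\bigl(\frac{1}{2m\lambda}\bigr)>2$, and conclude by the Intermediate Value Theorem using continuity (the paper phrases this via monotonicity, but the argument is the same). Your added remarks on uniqueness and on $\frac{1}{2m\lambda}$ lying inside the interval of convergence are correct but not needed for the statement.
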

		\begin{proof}
			From equation (\ref{zeta_with_real_coefficients}) we find $\zeta_G(0) = 1$. Therefore, $\zeta_G(0) - 2 \leq 0$. Also, the lemma \ref{zeta_1_by_2m_lambda_greater_than_2} refers that $\zeta_G(\frac{1}{2m\lambda}) - 2 > 0$. In addition $\zeta_G(x)$ is a monotone function. Therefore, there is a point $x_1 \in (0, \frac{1}{2m\lambda})$ such that $\zeta_G(x_1) = 2$.
		\end{proof}
		
		With the help of the equation (\ref{values_of_denominator}) and the lemma \ref{zeta_G_x_1_is_equal_to_2} we can conclude that 
		\begin{equation}\label{1_by_2m_lambda_less_than_x_0}
		0 < x_1 < \frac{1}{2m \lambda} < x_0 < \frac{1}{\lambda} < 1,
		\end{equation}
		and $\zeta_G(x) > 2$ for all $x > x_1$.
		
		\begin{theorem}\label{zeta_is_admissible}
			The Ihara Zeta function $\zeta_G(x)$ for a given graph $G$ satisfies
			$$R_{\zeta_G} = \frac{1 - a\zeta_G'(a x^\sigma) - a^2 x^\sigma \frac{\sigma}{1 + \sigma}\zeta_G''(ax^\sigma)}{1 - a\zeta_G'(a)} > 0$$ 
			for $x \in [0, 1]$, $a \in (x_1, x_0)$ and $\sigma \in (0, l_\sigma)$ where $x_1$ and $x_0$ are the roots of the equations $\zeta_G(x) = 2$, and $x\zeta'_G(x) = 1$, respectively, in $(0, \frac{1}{\lambda})$. Also, $l_\sigma = \frac{1 - x_1 \zeta'_G(x_1)}{x_1^2 \zeta''_G (x_1)  + x_1 \zeta'_G (x_1) - 1}$.
		\end{theorem}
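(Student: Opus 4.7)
The plan is to reduce the inequality $R_{\zeta_G}>0$ to a one-variable constraint on $\sigma$ by stripping away the $x$-dependence using the monotonicity results developed earlier in this section. First I would deal with the denominator: note that $1-a\zeta_G'(a)$ is exactly the function $h(a)$ of Lemma \ref{denominator}. Because $a\in(x_1,x_0)\subset(0,x_0)$ lies strictly to the left of the unique root $x_0$ of $h$, the sign information in (\ref{values_of_denominator}) gives $h(a)>0$. So the denominator is positive and it suffices to prove that the numerator
$$N(a,x,\sigma) := 1-a\zeta_G'(ax^\sigma)-a^2 x^\sigma \frac{\sigma}{1+\sigma}\zeta_G''(ax^\sigma)$$
is positive for every $x\in[0,1]$.

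The second step is to remove the $x$-dependence by a worst-case bound. For $x\in[0,1]$ and $\sigma>0$ we have $x^\sigma\in[0,1]$ and hence $ax^\sigma\in[0,a]\subset[0,1/\lambda)$. The lemma establishing that $\zeta_G'$ and $\zeta_G''$ are monotone increasing on $[0,1/\lambda)$ then yields $\zeta_G'(ax^\sigma)\le\zeta_G'(a)$ and $\zeta_G''(ax^\sigma)\le\zeta_G''(a)$; using also $x^\sigma\le 1$ in the second summand I obtain the uniform lower bound
$$N(a,x,\sigma)\ge 1-a\zeta_G'(a)-a^2\frac{\sigma}{1+\sigma}\zeta_G''(a),$$
so it suffices to make this right-hand side strictly positive.

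The third step is purely algebraic. Positivity of the bound above is equivalent to $\frac{\sigma}{1+\sigma}<\frac{1-a\zeta_G'(a)}{a^2\zeta_G''(a)}$, and since $\sigma\mapsto\sigma/(1+\sigma)$ is strictly increasing on $(0,\infty)$ this inverts to the explicit threshold
$$\sigma<\frac{1-a\zeta_G'(a)}{a^2\zeta_G''(a)+a\zeta_G'(a)-1},$$
provided the right-hand denominator is positive. Evaluating this threshold at $a=x_1$ reproduces exactly $l_\sigma$, so the hypothesis $\sigma\in(0,l_\sigma)$ supplies the required inequality at the left endpoint of the range of $a$.

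The main obstacle I anticipate is that the assertion is made uniformly in $a\in(x_1,x_0)$, whereas the threshold $\varphi(a):=[1-a\zeta_G'(a)]/[a^2\zeta_G''(a)+a\zeta_G'(a)-1]$ is not obviously minimized at $a=x_1$; indeed its numerator $h(a)$ tends to $0$ as $a\to x_0^-$, which is in real tension with the crude worst-case bound of the second step. Closing this gap will require either a direct monotonicity analysis of $\varphi$ on $(x_1,x_0)$, drawing on the explicit power-series form of $\zeta_G$ in (\ref{zeta_with_real_coefficients}), or a sharpened estimate exploiting $x^\sigma<1$ strictly for $x<1$ that tightens the second step precisely when $a$ is close to $x_0$. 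This uniformity in $a$ is where I expect the argument to need the most care.
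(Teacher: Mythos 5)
Your first three steps reproduce the paper's own argument almost exactly: positivity of the denominator from Lemma \ref{denominator} and (\ref{values_of_denominator}); the worst-case bound on the numerator via the monotonicity of $\zeta'_G$ and $\zeta''_G$ together with $x^\sigma\leq 1$; and the algebraic inversion to the threshold $\sigma<\varphi(a)$ with $\varphi(a)=\frac{1-a\zeta'_G(a)}{a^2\zeta''_G(a)+a\zeta'_G(a)-1}$. The gap you flag in your final paragraph is genuine, and the paper does not close it. The paper's proof establishes $1-a\zeta'_G(a)<1-x_1\zeta'_G(x_1)$ and $a^2\zeta''_G(a)+a\zeta'_G(a)-1>x_1^2\zeta''_G(x_1)+x_1\zeta'_G(x_1)-1$ for $a>x_1$, hence $\varphi(a)<\varphi(x_1)=l_\sigma$, and then concludes that $\sigma<l_\sigma$ suffices. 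That is the wrong direction: to cover every $a\in(x_1,x_0)$ one needs $\sigma<\inf_{a\in(x_1,x_0)}\varphi(a)$, whereas the paper has only bounded $\varphi(a)$ from \emph{above} by its value at the left endpoint.

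Your suspicion that the tension is real, and not an artifact of the crude bound in your second step, is confirmed by evaluating at $x=1$, where no slack from $x^\sigma<1$ is available: there the numerator of $R_{\zeta_G}$ is exactly $1-a\zeta'_G(a)-a^2\frac{\sigma}{1+\sigma}\zeta''_G(a)$, and as $a\to x_0^-$ the term $1-a\zeta'_G(a)$ tends to $0$ while $a^2\frac{\sigma}{1+\sigma}\zeta''_G(a)$ tends to the positive quantity $x_0^2\frac{\sigma}{1+\sigma}\zeta''_G(x_0)$. Hence for any fixed $\sigma>0$ the numerator is negative for $a$ sufficiently close to $x_0$, and $R_{\zeta_G}<0$ there. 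So the statement with an $a$-independent bound $l_\sigma=\varphi(x_1)$ cannot be rescued by a sharper estimate along these lines; it is correct only if the admissible range of $\sigma$ is allowed to depend on $a$, namely $\sigma<\varphi(a)$, which is precisely what both you and the paper derive before the final (invalid) uniformization. Your proposal is as complete as the published proof up to that point, and your diagnosis of where the argument breaks is exactly right.
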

		
		\begin{proof}
			The equation (\ref{1_by_2m_lambda_less_than_x_0}) suggests that $a \in (x_1, x_0) \subset [0, \frac{1}{\lambda})$. Hence, for any $a \in (x_1, x_0)$ and $x \in [0, 1]$ we have $ax \in [0, x_0)$. The lemma \ref{denominator} along with the equation (\ref{values_of_denominator}) suggests that $1 - a\zeta_G'(a) > 0$ for all $a \in (x_1, x_0)$. 
			
			Note that, for any $\sigma > 0$ and $x \in [0, 1]$ we have $0 \leq x^\sigma \leq 1$. Or $0 \leq a x^\sigma \leq a$ for all $a >0$. As $\zeta_G, \zeta_G'$ and $\zeta_G''$ are monotone increasing functions we have $\zeta'_G(ax^\sigma) \leq \zeta'_G(a)$ and $\zeta''_G(ax^\sigma) \leq \zeta''_G(a)$ for all $\sigma$.
			
			As $1 - a\zeta_G'(a) > 0$ in the range of $a$ we need the range of $\sigma$ such that $1 - a\zeta_G'(a x^\sigma) - a^2 x^\sigma \frac{\sigma}{1 + \sigma}\zeta_G''(ax^\sigma) > 0$, that is $a\zeta_G'(a x^\sigma) + a^2 x^\sigma \frac{\sigma}{1 + \sigma}\zeta_G''(ax^\sigma) \leq 1$. Applying the monotonicity of $\zeta_G'$ and $\zeta_G''$ we find $a\zeta_G'(a x^\sigma) + a^2 x^\sigma \frac{\sigma}{1 + \sigma}\zeta_G''(ax^\sigma) \leq a\zeta_G'(a) + a^2 \frac{\sigma}{1 + \sigma}\zeta_G''(a)$. Now $a\zeta_G'(a) + a^2 \frac{\sigma}{1 + \sigma}\zeta_G''(a) \leq 1$ provides us the limit of $\sigma$. Simplifying we get,
			\begin{equation}
			\begin{split} 
			& \frac{\sigma}{1 + \sigma} < \frac{1 - a\zeta_G'(a)}{a^2 \zeta_G''(a)}\\
			\text{or}~ & \sigma < \frac{1 - a\zeta_G'(a)}{a^2\zeta_G''(a) + a\zeta_G'(a) - 1},
			\end{split} 
			\end{equation}
			for all $a$ such that $x_1 \leq a \leq x_0$. 
			
			Now, lemma \ref{denominator} suggests that $\max_{a \in (x_1, x_0)}[1 - a\zeta_G'(a)] = 1 - x_1\zeta_G'(x_1)$, that is $1 - a\zeta_G'(a) < 1 - x_1\zeta_G'(x_1)$. It indicates
			\begin{equation}
			\begin{split}
			& -(1 - a\zeta_G'(a)) > -(1 - x_1\zeta_G'(x_1))\\
			\text{or}~ & a^2\zeta_G''(a) -(1 - a\zeta_G'(a)) > x_1^2\zeta_G''(x_1) - (1 - x_1\zeta_G'(x_1))\\
			\text{or}~ & a^2\zeta_G''(a) + a\zeta_G'(a) -1 > x_1^2\zeta_G''(x_1) + x_1\zeta_G'(x_1) - 1.\\
			\end{split}
			\end{equation} 
			Combining we get 
			\begin{equation}
			\sigma < \frac{1 - a\zeta_G'(a)}{a^2\zeta_G''(a) + a\zeta_G'(a) - 1} < \frac{1 - x_1 \zeta'_G(x_1)}{x_1^2 \zeta''_G (x_1)  + x_1 \zeta'_G (x_1) - 1} = l_\sigma.
			\end{equation} 
			Therefore, for all $x \in [0,1], a \in (x_1, x_0)$ and $\sigma \in (0, l_\sigma)$ we have $R_{\zeta_G} > 0$. 
		\end{proof}

	\section{Ihara entropy}
		
		Let $R$ be a commutative ring with identity. The ring of formal power series in the variables $x_1, x_2, \dots$ with coefficients in $R$ is denoted by $R\{x_1, x_2, \dots\}$ of. A commutative one-dimensional formal group law over $R$  is a formal power series $\Phi(x, y) \in R\{x, y\}$ of the form $\Phi(x, y) = x + y +$ higher order terms, such that
		\begin{equation}
			\begin{split}
				& \Phi(x, 0) = \Phi(0, x) = x, \\
				& \Phi (\Phi (x, y), z) = \Phi (x, \Phi (y, z)),\\
				& \Phi(x, y) = \Phi(y, x).
			\end{split}
		\end{equation}
		Over a field of characteristic zero, there exists an equivalence of categories between Lie algebras and formal groups. Over the ring $B = \mathbb{Z}[c_1, c_2 , \dots]$ of integral polynomials in infinitely many variables $c_1, c_2, \dots$, we define the formal group logarithm as
		\begin{equation}
			F(s) = \sum_{i = 0}^\infty c_i \frac{s^{i + 1}}{i + 1},
		\end{equation} 
		with $c_0 = 1$. In addition, the formal group exponential is defined as
		\begin{equation}
			G(t) = \sum_{i = 0}^\infty \gamma_i \frac{t^{i + 1}}{i + 1},
		\end{equation}
		such that $F(G(t)) = t$. Recall that, the Lazard formal group law is defined by the formal power series
		\begin{equation}\label{Lazard_formal_group_law} 
			\Phi(s_1, s_2) = G(F(s_1 ) + F(s_2)).
		\end{equation}
		Now, let $\{a_k\}_{k \in \mathbb{N}}$ be a real sequence such that $a_0 \neq 0$ and the power series $\sum_{k = 0}^\infty a_k \frac{t^{k+1}}{k+1}$ converges to a real analytic function $G(t)$, that is $G(t) = \sum_{k = 0}^\infty a_k \frac{t^{k+1}}{k+1}$. Note that, the Hansel's lemma \cite{inverting_formal_power_series} confirms existence of the compositions inverse $F(s)$ such that $F(G(t)) = t$ based on the conditions $a_0 = 0$.		
		
		Let $G$ be a formal power series with its compositional inverse $F$, we define the formal group entropy as 
		\begin{equation}\label{universal_formal_group}
			S([P]) = \sum_{i = 1}^W p_i G\left(\log\left(\frac{1}{p_i}\right)\right),
		\end{equation}
		where $[P] = \{p_i\}_{i = 1}^W$ is a discrete probability distribution.
		
		Recall from equation (\ref{zeta_with_real_coefficients}) that the Ihara zeta function can be expressed as a formal power series. The leading term of $\zeta_G(x)$ is $1$ and $c_1 = 0$. Therefore, the Hensel's lemma suggests that there is no formal power series $F(s)$ such that $F(\zeta_G(t)) = t$. It prevents us to assume $G(t) = \zeta_G(t)$ to define universal formal group entropy (\ref{universal_formal_group}). Hence, for overcoming this obstacle, define a formal power series in terms of Ihara zeta function in a claver fashion.
		
		To introducing it first we transformation $t = \log(\frac{1}{p})$, which refers $p = e^{-t}$. Therefore, the equation (\ref{zeta_with_real_coefficients}) indicates that
		\begin{equation}
			\zeta_G(ae^{-t\sigma}) = 1 + c_2 (ae^{-t\sigma})^2 + c_3 (ae^{-t\sigma})^3 + c_4(ae^{-t\sigma})^4 + \dots.
		\end{equation}
		Also,
		\begin{equation}
			\zeta_G(a) = 1 + c_2 a^2 + c_3 a^3 + c_4 a^4 + \dots.
		\end{equation}
		Hence,
		\begin{equation}
			\zeta_G(ae^{-t\sigma}) - \zeta_G(a) = c_2 a^2(e^{-2t\sigma} - 1) + c_3 a^3(e^{-3t\sigma} - 1) + c_4 a^4(e^{-4t\sigma} - 1) + \dots 
		\end{equation}
		Note that
		\begin{equation}
			e^{-t\sigma} - 1 = -t\sigma + \frac{t \sigma}{2!} - \frac{t \sigma}{3!} + \dots 
		\end{equation}
		Now,
		\begin{equation}
			\begin{split} 
				& \zeta_G(ae^{-t\sigma}) - \zeta_G(a) + 1 - e^{-t\sigma}\\
				= & (1 - e^{-2t\sigma}) + c_2 a^2(e^{-2t\sigma} - 1) + c_3 a^3(e^{-3t\sigma} - 1) + c_4 a^4(e^{-4t\sigma} - 1) + \dots
			\end{split} 
		\end{equation}
		Clearly, $\zeta_G(ae^{-t\sigma}) - \zeta_G(a) + 1 - e^{-2t\sigma}$ has no constant term. The coefficient of $t$ in the power series of $\zeta_G(ae^{-t\sigma}) - \zeta_G(a) + 1 - e^{-2t\sigma}$ is 
		\begin{equation}
			\begin{split}
				& \frac{d}{dt}\left[\zeta_G(ae^{-t\sigma}) - \zeta_G(a) + 1 - e^{-t\sigma}\right]|_{t = 0}\\
				= & \left[-a \sigma \zeta_G'(ae^{-t\sigma}) e^{-t\sigma} + \sigma e^{-t\sigma} \right] |_{t = 0} = \sigma \left[1 - a \zeta'_G(a)\right].
			\end{split}
		\end{equation}
		Hence coefficient of $t$ in the formal power series of the following expression is unity:
		\begin{equation}
			\frac{\zeta_G(ae^{-t\sigma}) - \zeta_G(a) + 1 - e^{-t\sigma}}{\sigma \left(1 - a \zeta'_G(a)\right)} = \frac{1}{(1 - a \zeta'_G(a))}\left[\frac{\zeta_G(ae^{-t\sigma}) - \zeta_G(a) + 1 - e^{-t\sigma}}{\sigma} \right].
		\end{equation} 
		Therefore, we have a formal power series associated to the Ihara zeta function satisfying the requirements of Hensel's lemma, which is given by
		\begin{equation}
			G(t) = \frac{1}{(1 - a \zeta'_G(a))}\left[\frac{\zeta_G(ae^{-t\sigma}) - e^{-t\sigma}}{\sigma} + \frac{1 - \zeta_G(a)}{\sigma} \right]. 
		\end{equation}
		This particular choice of $G$ will be clarified to the reader when we discuss information theoretic aspects of the Ihara entropy in the next section. Now replacing $t = \log(\frac{1}{p})$ we find 
		\begin{equation}
			G\left(\log\left(\frac{1}{p}\right)\right) = \frac{1}{(1 - a \zeta'_G(a))}\left[\frac{\zeta_G(ap^\sigma) - p^\sigma}{\sigma} + \frac{1 - \zeta_G(a)}{\sigma} \right].
		\end{equation}
		It leads us to construct the formal group theoretic entropy associated to the Ihara zeta function, which is defined below. 
		\begin{deffinition}\label{trace_from_entropy}
			Given a graph $G$ the Ihara entropy of a discrete probability distribution $[\mathcal{P}] = \{p_1, p_2, \dots p_W\}$ is defined by
			$$S_G([\mathcal{P}]) = \frac{1}{1 - a\zeta_G'(a)}\sum_{i = 1}^W p_i\left(\frac{\zeta_G(a p_i^\sigma) - p_i^\sigma}{\sigma} + \frac{1 - \zeta_G(a)}{\sigma}\right),$$
			where $a$ and $\sigma$ are two parameters with $a \in (x_1, x_0)$ and $\sigma \in (0, l_\sigma)$. Here $x_1$ and $x_0$ are roots of the equations $\zeta(x) = 2$, and $x\zeta'_G(x) = 1$, respectively. In addition, $l_\sigma = \min\left\{1, \frac{1 - a \zeta'_G(a)}{a^2 \zeta''_G \left( \frac{1}{\lambda}\right)  + a \zeta'_G \left( \frac{1}{\lambda}\right) - 1}\right\}$.
		\end{deffinition}

	\section{Information theoretic properties}
	
		The Ihara entropy is mentioned in the definition \ref{trace_from_entropy} fulfills the Shannon-Khinchin axioms \cite{shannon1963mathematical, shannon1948mathematical, khinchin2013mathematical}, relevant in the information theoretic context. The Shannon-Khinchin axioms for an entropy $S([\mathcal{P}])$ are mentioned below:
		\begin{enumerate}
			\item \label{axiom_1} 
				The function $S([\mathcal{P}])$ is continuous with respect to all its arguments $p_i$, where $[\mathcal{P}] = \{p_i\}_{i = 1}^W$ is a discrete probability distribution.
			\item \label{axiom_2}
				The function $S([\mathcal{P}])$ is maximum for the uniform distribution $[\mathcal{P}] = \{\frac{1}{W}\}_{i = 1}^W$.
			\item \label{axiom_3}
				Adding a zero probability event to a probability distribution does not alter its entropy, that is $S([\mathcal{P}_1]) = S([\mathcal{P}])$ where $[\mathcal{P}_1] = \{p_i\}_{i = 1}^W \cup \{0\}$.
			\item \label{axiom_4}
				Given two subsystems $A, B$ of a statistical system, $S(A + B) = S(A) + S(B|A)$.
		\end{enumerate}
		
		Given a probability value $p$ define a function $s:[0, 1] \rightarrow \mathbb{R}$ such that 
		\begin{equation}
			s(p) = \frac{1}{\sigma(1 - a\zeta_G'(a))} \left[p \left\{ \zeta_G(a p^\sigma) - p^\sigma + 1 - \zeta_G(a) \right\}\right]. 
		\end{equation}
		Therefore, the Ihara entropy $S_G([\mathcal{P}]) = \sum_{i = 1}^W s(p_i)$. 
		
		Clearly, $s(p)$ is a continuous function of $p$. Therefore, the Ihara entropy $S_G([\mathcal{P}])$ is continuous with respect to all its arguments $p_i$ for $i = 1, 2, \dots W$. Therefore Ihara entropy satisfies the axiom \ref{axiom_1}.
			
		The axiom \ref{axiom_3} also trivially satisfied as $s(0) = 0$, that is $0$ probability alters nothing in $S([\mathcal{P}])$.

		The axiom \ref{axiom_2} is concerned to the maximization of the $s(p)$ in $(0,1)$, which is ensured by the following theorem.
		
		\begin{theorem}
			There exists a maxima of $s(p)$ in $(0, 1)$ that is there is a point $c \in (0, 1)$, such that, $s'(c) = 0$ and $s''(c) < 0$.
		\end{theorem}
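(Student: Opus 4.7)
The plan is to recognise that $s(p)$ is (up to a positive multiplicative constant) a function that vanishes at both endpoints of $[0,1]$ and is strictly concave on the open interval, so that the existence of an interior strict maximum follows from Rolle's theorem combined with the second-derivative test.

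I would begin by writing $s(p) = K\,g(p)$ with $K = \frac{1}{\sigma\bigl(1 - a\zeta_G'(a)\bigr)}$ and
\[
g(p) \;=\; p\bigl[\zeta_G(a p^\sigma) - p^\sigma + 1 - \zeta_G(a)\bigr].
\]
Under the standing hypotheses $a \in (x_1, x_0)$ and $\sigma \in (0, l_\sigma)$, Lemma \ref{denominator} together with (\ref{values_of_denominator}) give $1 - a\zeta_G'(a) > 0$, so $K > 0$. Substituting $p=0$ and $p=1$ immediately yields $g(0) = 0$ and $g(1) = \zeta_G(a) - 1 + 1 - \zeta_G(a) = 0$, whence $s(0) = s(1) = 0$.

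Next I would differentiate $g$ twice by the product and chain rules. After routine rearrangement the result can be cast in the form
\[
g''(p) \;=\; -\,\sigma(1+\sigma)\,p^{\sigma-1}\left[\,1 - a\,\zeta_G'(a p^\sigma) - \frac{a^2\,\sigma\,p^\sigma}{1+\sigma}\,\zeta_G''(a p^\sigma)\,\right].
\]
The crucial observation is that the bracket is exactly the numerator of the quantity $R_{\zeta_G}$ introduced in Theorem \ref{zeta_is_admissible}, read with $x = p$. That theorem guarantees $R_{\zeta_G} > 0$ on $[0,1]$ for the chosen ranges of $a$ and $\sigma$, and since the denominator $1 - a\zeta_G'(a)$ appearing in $R_{\zeta_G}$ is positive, the numerator must be positive too. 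Therefore $g''(p) < 0$ on $(0,1)$, and hence $s''(p) = K\,g''(p) < 0$ throughout $(0,1)$; in particular $s$ is strictly concave on the open interval.

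Finally, since $s$ is continuous on $[0,1]$, differentiable on $(0,1)$, and satisfies $s(0) = s(1) = 0$, Rolle's theorem supplies some $c \in (0,1)$ with $s'(c) = 0$, and the strict concavity just established forces $s''(c) < 0$. Thus $c$ is a strict interior maximum, as required. The only non-routine step is the second-derivative calculation and its term-by-term identification with the admissibility numerator of Theorem \ref{zeta_is_admissible}; once that bridge is built the rest of the argument is a one-line appeal to Rolle's theorem plus the second-derivative test, and no further global analysis of $s$ is needed.
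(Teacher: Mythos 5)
Your concavity step is in substance the same as the paper's: both compute $s''(p)$, recognise the bracket as the numerator of $R_{\zeta_G}$ from Theorem \ref{zeta_is_admissible} evaluated at $x=p$, and conclude $s''<0$ on $(0,1)$ from the positivity of $R_{\zeta_G}$ together with $1-a\zeta_G'(a)>0$. Where you genuinely diverge is the existence of the critical point. The paper applies the intermediate value theorem to $h(p)=\sigma\left(1-a\zeta_G'(a)\right)s'(p)$, using $h(0)=2-\zeta_G(a)<0$ (this is where $a>x_1$ and Theorem \ref{zeta_G_x_1_is_equal_to_2} enter) and the claimed value $h(1)=\sigma+a\sigma\zeta_G'(a)>0$; you instead observe $s(0)=s(1)=0$ and invoke Rolle's theorem. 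Your route is not only cleaner but actually repairs a defect: substituting $p=1$ gives $h(1)=a\sigma\zeta_G'(a)-\sigma=\sigma\left(a\zeta_G'(a)-1\right)$, not $\sigma+a\sigma\zeta_G'(a)$, and this is \emph{negative} because $a<x_0$ forces $a\zeta_G'(a)<1$ by Lemma \ref{denominator}. So both of the paper's endpoint values are negative and its intermediate value argument does not close, whereas Rolle needs only the boundary values of $s$ and is immune to this.

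You should, however, be aware that your two ingredients expose a latent inconsistency in the framework you are borrowing. If $s(0)=s(1)=0$ and $s''<0$ throughout $(0,1)$, then $s$ is concave with equal boundary values, hence $s\geq 0$ and $s'(0^+)\geq 0$. But a direct limit gives $s'(0^+)=\frac{2-\zeta_G(a)}{\sigma\left(1-a\zeta_G'(a)\right)}$, which is strictly negative precisely because $a>x_1$ guarantees $\zeta_G(a)>2$. This contradiction shows that global concavity on $(0,1)$ cannot actually hold under the stated hypotheses; the weak link is Theorem \ref{zeta_is_admissible} itself, whose proof deduces $\sigma<\frac{1-a\zeta_G'(a)}{a^2\zeta_G''(a)+a\zeta_G'(a)-1}$ from $\sigma<l_\sigma$ together with the bound $\frac{1-a\zeta_G'(a)}{a^2\zeta_G''(a)+a\zeta_G'(a)-1}<l_\sigma$ --- an inequality chain pointing the wrong way, and indeed the required bound degenerates to $0$ as $a\to x_0$. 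Your deduction is therefore valid as a derivation from the paper's stated results, and superior to the paper's own, but the concavity premise that both proofs rely on is not actually secured and in fact must fail somewhere in $(0,1)$ for these parameter ranges.
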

		\begin{proof}  
			We have
			\begin{equation}
				\begin{split}
					s'(p) & = \frac{1}{\sigma(1 - a\zeta_G'(a))} \left[\zeta_G(a p^\sigma) - p^\sigma + 1 - \zeta_G(a) + p\{\zeta'_G(a p^\sigma)a\sigma p^{\sigma - 1} - \sigma p^{\sigma-1} \}\right] \\
					& = \frac{1}{\sigma(1 - a\zeta_G'(a))} \left[ a \sigma p^\sigma \zeta_G'(a p^\sigma) + \zeta_G(a p^\sigma) - (1 + \sigma)p^\sigma - \zeta_G(a) + 1\right].
				\end{split}
			\end{equation}
			Now $s'(p) = 0$ indicates that the function 
			\begin{equation}
				h(p) = a \sigma p^\sigma \zeta_G'(a p^\sigma) + \zeta_G(a p^\sigma) - (1 + \sigma)p^\sigma - \zeta_G(a) + 1.
			\end{equation}
			has a root in $(0,1)$. Now $h(0) = 2 - \zeta_G(a)$ and $h(1) = \sigma + a \sigma \zeta_G'(a)$. In definition \ref{trace_from_entropy} we consider $a > x_1$. Recall that the lemma \ref{zeta_G_x_1_is_equal_to_2} indicates $\zeta_G(x) > 2$ for all $a > x_1$. That is $h(0) < 0$. Also $h(1) > 0$. As $h$ is a continuous function of $p$ there is at least one point $p = c$ in $(0, 1)$ such that $h(c) = 0$ that is $s'(c) = 0$. Now we prove $s''(p) < 0$ for all $p \in (0, 1)$ that is $s''(c) < 0$. Note that, 
			\begin{equation}
				\begin{split}
					s''(p) = \frac{1}{\sigma(1 - a\zeta_G'(a))} [ a \sigma^2 p^{\sigma - 1} \zeta_G'(a p^\sigma)  + a \sigma p^\sigma \zeta_G''(a p^\sigma) a \sigma p^{\sigma -1} & \\
					+ \zeta_G'(a p^\sigma) a \sigma p^{\sigma - 1} - (1 + \sigma) p^{\sigma - 1} ] &\\
					\text{or}~ s''(p) = \frac{1}{\sigma(1 - a\zeta_G'(a))} [a^2 \sigma^2 p^{2\sigma - 1} \zeta_G''(a p^\sigma) + a \sigma (\sigma + 1) p^{\sigma - 1} \zeta_G'(a p^\sigma)&\\
					- \sigma(1 + \sigma) p^{\sigma - 1}]&\\
					\text{or}~ s''(p) = \frac{(\sigma + 1) p^{\sigma-1}}{1 - a\zeta_G'(a)}[a^2\frac{\sigma}{\sigma + 1} p^\sigma \zeta_G''(a p^\sigma) + a \zeta_G'(a p^\sigma) - 1] &
				\end{split}
			\end{equation}
			The theorem \ref{zeta_is_admissible} indicates that in the specific range of $a$ and $\sigma$ we have 
			\begin{equation}
				\frac{1 - a\zeta_G'(a x^\sigma) - a^2 x^\sigma \frac{\sigma}{1 + \sigma}\zeta_G''(ax^\sigma)}{1 - a\zeta_G'(a)} > 0.
			\end{equation}
			Clearly $s''(p) < 0$ for all $p \in (0, 1)$. Hence $s''(c) < 0$.
		\end{proof} 
		
		Recall the definition \ref{trace_from_entropy}. The entropy $S_G([\mathcal{P}])$ consider the maximum value if $s(p_i)$ is maximum for all $p_i \in [\mathcal{P}]$. Thus, to maximize $S_G([\mathcal{P}])$ we need $p_i = c$ for all $i$, which is the uniform distribution after a normalization. Therefore, the Ihara entropy mentioned in definition \ref{trace_from_entropy} fulfills the axiom \ref{axiom_2} of the Shannon-Khinchin conditions.
	
		The axiom \ref{axiom_4} is generalised by the composition law of the Lazard formal group law mentioned in equation  (\ref{Lazard_formal_group_law}) \cite[theorem 1]{tempesta2016beyond}. 
	
		\begin{corollary}
			The Ihara entropy $S_G([P])$ becomes the Shannon entropy when $\sigma \rightarrow 0$.
		\end{corollary}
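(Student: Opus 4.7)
The plan is to compute $\lim_{\sigma \to 0^{+}} S_G([\mathcal{P}])$ directly from Definition \ref{trace_from_entropy} by interchanging the limit with the finite sum over $i$. Since the prefactor $\frac{1}{1 - a\zeta_G'(a)}$ is independent of $\sigma$, everything reduces to computing, for each $p_i \in (0,1]$, the limit of the bracketed quantity
\begin{equation*}
\frac{\zeta_G(a p_i^\sigma) - p_i^\sigma + 1 - \zeta_G(a)}{\sigma}
\end{equation*}
as $\sigma \to 0^{+}$. The legitimacy of the swap is immediate because the index set is finite, and the admissible range $\sigma \in (0, l_\sigma)$ with $l_\sigma > 0$ (Theorem \ref{zeta_is_admissible}) ensures $\sigma$ may approach zero from within the domain where $S_G$ is defined.

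Since $p_i^\sigma \to 1$ as $\sigma \to 0^{+}$, the numerator tends to $\zeta_G(a) - 1 + 1 - \zeta_G(a) = 0$, giving a $0/0$ indeterminate form. I would then apply L'Hopital's rule (equivalently expand $p_i^\sigma = 1 + \sigma \log p_i + O(\sigma^2)$ and use the analyticity of $\zeta_G$ near $a$). Differentiating the numerator in $\sigma$ and using $\frac{d}{d\sigma} p_i^\sigma = p_i^\sigma \log p_i$ yields
\begin{equation*}
\frac{d}{d\sigma}\bigl[\zeta_G(a p_i^\sigma) - p_i^\sigma + 1 - \zeta_G(a)\bigr] = a \zeta_G'(a p_i^\sigma) p_i^\sigma \log p_i - p_i^\sigma \log p_i,
\end{equation*}
which at $\sigma = 0$ evaluates to $(a \zeta_G'(a) - 1)\log p_i$. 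The denominator differentiates to $1$, so the indeterminate form resolves to
\begin{equation*}
\lim_{\sigma \to 0^{+}} \frac{\zeta_G(a p_i^\sigma) - p_i^\sigma + 1 - \zeta_G(a)}{\sigma} = (a \zeta_G'(a) - 1)\log p_i.
\end{equation*}

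Substituting back into the definition and pulling the $\sigma$-independent constant outside the sum, the prefactor and the common factor $(a\zeta_G'(a) - 1)$ cancel up to sign:
\begin{equation*}
\lim_{\sigma \to 0^{+}} S_G([\mathcal{P}]) = \frac{a\zeta_G'(a) - 1}{1 - a\zeta_G'(a)} \sum_{i=1}^{W} p_i \log p_i = -\sum_{i=1}^{W} p_i \log p_i,
\end{equation*}
which is precisely the Shannon entropy. This telescoping cancellation is exactly why the normalising factor $\frac{1}{1 - a\zeta_G'(a)}$ was inserted into the construction of $G(t)$ in Section 3, fulfilling the promise made there. I do not anticipate any serious obstacle; the only mild point of care is handling any $p_i = 0$, but those terms vanish identically for every $\sigma > 0$ by axiom \ref{axiom_3}, so they can be discarded before taking the limit and the conventional $0 \log 0 = 0$ convention on the Shannon side creates no ambiguity.
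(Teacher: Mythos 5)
Your proof is correct and follows essentially the same route as the paper: identify the $0/0$ indeterminate form in $\sigma$ and resolve it with L'H\^opital's rule (or the expansion $p_i^\sigma = 1 + \sigma\log p_i + O(\sigma^2)$). In fact your chain-rule computation is the more careful one --- the paper's displayed derivative omits the factor $a$ arising from $\frac{d}{d\sigma}\zeta_G(ap^\sigma) = a\,\zeta_G'(ap^\sigma)\,p^\sigma\log p$, and it is precisely that factor which makes the limit $(a\zeta_G'(a)-1)\log p_i$ cancel against the normalising constant $1-a\zeta_G'(a)$ to give $-\sum_{i}p_i\log p_i$.
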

	
		\begin{proof}
			Note that,
			\begin{equation}
				\frac{d}{d\sigma} p \left\{ \zeta_G(a p^\sigma) - p^\sigma + 1 - \zeta_G(a) \right\} = (\zeta_G'(ap^\sigma) - 1)p\log(p).
			\end{equation}
			Also, $\lim_{\sigma \rightarrow 0}(\zeta_G'(ap^\sigma) - 1) = (\zeta_G'(a) - 1)$. Now applying the LH{\^o}pital's rule we reach to the conclusion.
		\end{proof}

	\section{Ihara entropy in billiard dynamics}
	
		The interface of zeta functions and the dynamical system is a well-studied topic in mathematics \cite{pollicott2001dynamical}. The Ihara zeta function of a finite graph can be considered as a Ruelle type zeta function associated to a symbolic dynamics defined on the graph \cite{nikitin2001ihara}.
		
		Recall the set of all oriented edges $\mathcal{E}$ mentioned in equation (\ref{alphabet}). To define a symbolic dynamical system, consider the set $\mathcal{E}$ as a collections of symbols, that is an alphabet \cite{lind1995introduction}. Two edges $e^{(i)}$ and $e^{(j)} \in \mathcal{E}$ are composeble if the terminal vertex of $e^{(i)}$ is the initial vertex of $e^{(j)}$ and the composition is written as $e^{(i)}e^{(j)}$. A bi-infinite walk $w$ is a bi-infinite sequence of edges from $\mathcal{E}$ given by
		\begin{equation}
		w = \prod_{i \in \mathbb{Z}} e_i = \dots e_{-2} e_{-1} e_0 e_1 e_2 \dots,
		\end{equation}
		such that any two constitutive edges $e_i$ and $e_{i + 1}$ are composeble for all $i \in \mathbb{Z}$. In symbolic dynamics we are interested in studying the bi-infinite sequences of symbols that is the bi-infinite walks over $\mathcal{E}$. The set of all such walks is called full $\mathcal{E}$-shift, which is denoted and precisely defined by 
		\begin{equation}
		\mathcal{E}^{\mathbb{Z}} = \{w: w = \prod_{i \in \mathbb{Z}} e_i, ~e_i \in \mathcal{E}, ~\text{for all}~ i \in \mathbb{Z}\}.
		\end{equation}
		
		A block over $\mathcal{E}$ is a finite walk $Q = e_1e_2\dots e_k$ over the edges in $\mathcal{E}$. The length of the block $Q$ is denoted by $\gamma(Q)$ which is the number of edges in $Q$. In graph theoretical nomenclature, $\gamma(Q)$ is length of the walk $Q$. If $w$ is an element in $\mathcal{E}^{\mathbb{Z}}$ then the block of coordinates in $w$ from position $i$ to $j$ is represented by
		\begin{equation}
		w_{[i,j]} = \prod_{k = i}^j e_k = e_i e_{i + 1} \dots e_j. 
		\end{equation} 
		If $w \in \mathcal{E}^{\mathbb{Z}}$ and $Q$ is a block over $\mathcal{E}$, we say that $Q$ occurs in $w$ if there are indices $i$ and $j$ so that $W = w_{[i,j]}$. 
		
		Now we define a set of forbidden blocks $\mathcal{F}$ over $\mathcal{A}$.
		\begin{equation}
		\mathcal{F} = \{e^{(i)}(e^{(i)})^{-1}: e^{(i)} \in \mathcal{E}\}.
		\end{equation} 
		For any such $\mathcal{F}$, define $X_{\mathcal{F}}$ to be the subset of sequences in $\mathcal{E}^\mathbb{Z}$ which does not contain the blocks in $\mathcal{F}$. A shift space is a subset $X$ of a full shift $\mathcal{E}^\mathbb{Z}$, such that $X = X_\mathcal{F}$ for some collection $\mathcal{F}$ of forbidden blocks over $\mathcal{A}$. Note that, $\mathcal{F}$ is finite and hence $X_\mathcal{F}$ is a shift space of finite type and of one-step.
		
		The edges of $\overline{G}$ represents the composition of two elements in $\mathcal{E}$. Therefore, a fine length walk in $\overline{G}$ represents a block over $\mathcal{E}$ and a prime is a closed cycle in $\overline{G}$. Also, there is no cycle of length $2$ in the graph $\overline{G}$.
		
		Two primes $P^{(1)} = e^{(1)}_1 e^{(1)}_2 \dots e^{(1)}_r$ and $P^{(2)} = e^{(2)}_1 e^{(2)}_2 \dots e^{(2)}_s$ are composable if the terminal vertex of $e^{(1)}_r$ is the initial vertex of $e^{(2)}_1$, and the composition is denoted by $P^{(1)}P^{(2)} = e^{(1)}_1 e^{(1)}_2 \dots e^{(1)}_re^{(2)}_1 e^{(2)}_2 \dots e^{(2)}_s$. In a similar fashion, we can define product of a prime $P$ and a finite walk $Q$, or product of two finite walks $Q_1$ and $Q_2$.
		
		\begin{lemma}
			Any infinite walk $w \in \mathcal{E}^{\mathbb{Z}}$ can be expressed as a product of primes and walks of finite length.
		\end{lemma}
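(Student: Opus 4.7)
The plan is to view $w$ as a bi-infinite walk on the vertices of the oriented line graph $\overline{G}$ rather than on $G$ itself: by the definition of $\overline{G}$, composability of consecutive symbols $e_i, e_{i+1} \in \mathcal{E}$ is exactly the existence of a directed edge from $e_i$ to $e_{i+1}$ in $\overline{G}$, so the sequence $(e_i)_{i \in \mathbb{Z}}$ visits vertices of $\overline{G}$. Since $|V(\overline{G})| = 2m < \infty$, applying the pigeonhole principle separately to the indices $i \geq 0$ and to the indices $i < 0$ yields vertices $v^+, v^- \in \mathcal{E}$ for which
\[
I^+ := \{i \geq 0 : e_i = v^+\}, \qquad I^- := \{i < 0 : e_i = v^-\}
\]
are both infinite; enumerate $I^+ = \{p_1 < p_2 < \cdots\}$ and $I^- = \{\cdots < q_2 < q_1\}$.

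The decomposition is then given by the blocks
\[
P_k^+ := w_{[p_k,\, p_{k+1}-1]}, \qquad P_k^- := w_{[q_{k+1},\, q_k-1]}, \qquad Q := w_{[q_1,\, p_1-1]},
\]
concatenated as $w = \cdots P_2^- P_1^- \, Q \, P_1^+ P_2^+ \cdots$. Each $P_k^+$ begins with $e_{p_k} = v^+$ and is immediately followed in $w$ by $e_{p_{k+1}} = v^+$, so as a walk in $G$ it starts at $i(v^+)$ and its last edge terminates at $i(e_{p_{k+1}}) = i(v^+)$, giving a closed walk. The wrap-around non-backtracking condition $e_{p_k} \neq (e_{p_{k+1}-1})^{-1}$ coincides with the shift-space condition $e_{p_{k+1}} \neq (e_{p_{k+1}-1})^{-1}$ and hence holds because $w$ avoids the forbidden blocks in $\mathcal{F}$. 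The analogous argument shows each $P_k^-$ is a cycle, and $Q$ is simply a finite walk of length $p_1 - q_1$ for which no cycle structure is needed.

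The main obstacle I foresee is conceptual rather than computational: one has to recognise that a single basepoint visited infinitely often in both temporal directions need not exist (for instance, the walk may eventually confine itself to different subregions of $\overline{G}$ as $i \to \pm\infty$), so the decomposition genuinely requires two basepoints $v^+$ and $v^-$ together with a finite connecting walk $Q$ — which is precisely why the lemma allows for walks of finite length in addition to primes. Once this point is granted, the argument reduces to pigeonhole together with the routine seam-check that the shift-space condition encodes the Ihara non-backtracking condition; none of the preceding analytic lemmas on $\zeta_G$ are invoked.
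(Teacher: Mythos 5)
Your proof is correct, and it rests on the same underlying mechanism as the paper's --- the pigeonhole principle applied to the finite alphabet $\mathcal{E}$ --- but your realisation of the decomposition is genuinely different and considerably more complete. The paper's argument only observes locally that a repeated vertex produces a cycle ``contained in a prime'' and that a vertex occurring exactly once produces a finite walk; it never exhibits a global factorisation of $w$, does not treat the two temporal directions, and does not verify the cyclic non-backtracking condition for the closed blocks it extracts. Your two-basepoint construction does all of this: choosing a symbol $v^+$ recurring infinitely often for $i \geq 0$ and a symbol $v^-$ recurring infinitely often for $i < 0$, and cutting $w$ at those recurrences, yields an explicit covering of all of $\mathbb{Z}$ by $\cdots P_2^- P_1^- \, Q \, P_1^+ P_2^+ \cdots$, and your observation that the wrap-around condition $e_{p_k} \neq (e_{p_{k+1}-1})^{-1}$ is literally the shift-space condition at the seam (because $e_{p_k} = e_{p_{k+1}} = v^+$) is exactly the point the paper glosses over. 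One cosmetic remark: with the paper's definition a prime is merely an equivalence class of cycles, so your closed blocks qualify directly; if one insisted on primitivity, a non-primitive block would still be a power of a prime, which the paper's subsequent formula $w = \prod_i P_i^{p_i} Q_i$ accommodates. In short, your argument delivers an actual decomposition where the paper's delivers only a plausibility sketch, and your diagnosis that two basepoints are unavoidable (a single symbol need not recur in both directions) identifies precisely why the finite walk $Q$ must appear in the statement.
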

		
		\begin{proof}
			Note that, an infinite product of primes and finite walks is an infinite walk. The set of all symbols $\mathcal{E}$ contains $2m$ symbols where $m$ is finite. Therefore, at least one edge must be repeated in $w$. Every edge has an initial and a terminal vertex and the number of the vertex is also finite. Hence, the vertices must be repeated in the infinite walk $w$. For simplicity let the initial vertex $v_0 = i(e_0)$ is repeated after $k$ edges, that is $v_0 = i(e_0) = t(e_k)$. Now consider the block $e_0e_1\dots e_k$ which is a cycle and must be contained in a prime. Let there is a vertex $v$ which appears exactly once in the walk $w$. Therefore, there are two edges $e_{k}$ and $e_{k + 1}$ in $w$ such that $t(e_k) = v = i(e_{k + 1})$, which appear in walk $w$ only once. Therefore there is a finite walk $\dots e_k e_{k + 1} \dots$ which appears in $w$ only once. It constructs a finite walk. Hence the proof.  
		\end{proof}
		
		The above lemma indicates that any bi-infinite walk $w \in \mathcal{E}^{\mathbb{Z}}$ can be expressed as 
		\begin{equation}
		w = \prod_{i \in \mathbb{Z}} P_{i}^{p_i} Q_i,
		\end{equation}
		where $p_i$ is the number of consecutive repetitions of the prime $P_i$ and $Q_i$ is the finite walk after the repetition of prime $P_i$. In general, length of $Q_i$ may be $0$. Note that, a walk $w$ may have multiple representations.
		
		Note that, the oriented line graph $\overline{G}$ preserves the sequence of walks on the graph $G$. As an adjacency matrix, the $(i,j)$-th element of $T^k$ denotes the total number of walks of length $k$ between vertices $i$ and $j$. In addition, the diagonal elements of $T^k$ denoted the length of cycles of length $k$. When $k \rightarrow \infty$ we get biinfinite walks in the graph. As $\overline{G}$ is a simple connected graph the matrix $T$ is an irreducible matrix which acts as the Perron-Frobenius operator associated with the dynamical system. In this context, the Ihara entropy can be considered as the entropy of the symbolic dynamical system associated with the graph.
		
		Now we consider a particular symbolic dynamical system: the dynamics of billiards \cite{dahlqvist1995approximate, morita1991symbolic}. In this article, we introduced a graph-theoretic billiard system. Consider a set of reflectors as a set of vertices in a graph. The shape and size of the reflectors may be arbitrarily chosen. The reflectors are arranged in a plane. Between two vertices there is an edge if a billiard can be reflected between two corresponding reflectors. Given any graph, we can design such a system of reflectors by placing the vertices on a plane. Edges will be drawn with straight lines such that no two edges coincide. 
		
		\begin{figure}
			\centering 
			\begin{subfigure}{0.4\textwidth}
				\centering 
				\begin{tikzpicture}
				\draw (0, 0) circle [radius= .5cm];
				\node at (0, 0) {$1$};
				\draw (2, 0) circle [radius= .5cm];
				\node at (2, 0) {$2$};
				\draw (0, 2) circle [radius= .5cm];
				\node at (0, 2) {$3$};
				\draw (2, 2) circle [radius= .5cm];
				\node at (2, 2) {$4$};
				\draw (1, 1) circle [radius= .5cm];
				\node at (1, 1) {$5$};
				\end{tikzpicture}
				\caption{System of reflectors} 
				\label{system_of_reflectors} 
			\end{subfigure}
			\begin{subfigure}{.4\textwidth}
				\centering 
				\begin{tikzpicture}
					\draw [fill] (0, 0) circle [radius= .05];
					\node [below] at (0, 0) {$1$};
					\draw [fill] (2, 0) circle [radius= .05];
					\node [below] at (2, 0) {$2$};
					\draw [fill] (0, 2) circle [radius= .05];
					\node [above] at (0, 2) {$3$};
					\draw [fill] (2, 2) circle [radius= .05];
					\node [above] at (2, 2) {$4$};
					\draw [fill] (1, 1) circle [radius= .05];
					\node [right] at (1, 1) {$5$};
					\draw (0, 0) -- (0, 2) -- (2, 2) -- (2, 0) -- (0, 0) -- (1, 1) -- (2, 2);
					\draw (2, 0) -- (1, 1) -- (0, 2);
				\end{tikzpicture}
				\caption{Representation with a graph} 
				\label{graph_representation}
			\end{subfigure} 
		
			\begin{subfigure}{1\textwidth}
				\centering 
				\includegraphics[scale = .5]{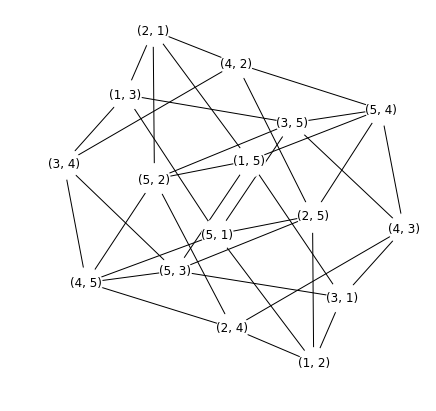}
				\caption{Networkx \cite{hagberg2008exploring} generated oriented line graph}
				\label{oriented_graph}
			\end{subfigure}
		\end{figure}
		
		For instance, consider figure \ref{system_of_reflectors} which is an arrangement of perfectly reflecting circles in the plane which is often studied in Sinai billiard dynamics \cite{Weisstein_Sinai}.This system can be represented as the combinatorial graph depicted in figure \ref{graph_representation}. A billiard can be reflected at any vertex and move towards a neighboring vertex. The movement of billiard over the edge $(u,v)$ may be from $u$ to $v$ or from $v$ to $u$. Therefore we can define two orientations on every edge. Hence, we find a total $16$ symbols over edges of the above graph mentioned below:
		\begin{equation}
		\begin{matrix}
		e^{(1)} = (1, 2), & e^{(2)} = (2, 4), & e^{(3)} = (4, 3), & e^{(4)} = (3, 1),\\
		e^{(5)} = (1, 5), & e^{(6)} = (2, 5), & e^{(7)} = (3, 5), & e^{(8)} = (4, 5),
		\end{matrix}
		\end{equation}
		and $e^{(8 + k)} = (e^{(k)})^{-1}$ for $k = 1, 2,\dots 8$. The set of all symbols is $\mathcal{E} = \{e^{(k)}: k = 1, 2, \dots 16\}$. An example of biinfinite sequence can be given by
		\begin{equation}
		\begin{split} 
		W = & \dots e^{(1)} e^{(6)}(e^{(5)})^{-1}e^{(1)} e^{(6)}(e^{(5)})^{-1}e^{(1)} e^{(2)}e^{(3)}e^{(7)} (e^{(8)})^{-1}e^{(3)}e^{(7)} (e^{(8)})^{-1} \dots \\
		= & (e^{(1)} e^{(6)}(e^{(5)})^{-1})^{p_1}e^{(1)} e^{(2)}(e^{(3)}e^{(7)} (e^{(8)})^{-1})^{p_2} ~\text{where}~ p_1, p_2 \in \mathbb{N}. 
		\end{split} 
		\end{equation}
		Here $W = (P_1)^{p_1} Q (P_2)^{p_2}$ where $P_1 = e^{(1)} e^{(6)}(e^{(5)})^{-1}$, $P_2 = e^{(3)}e^{(7)} (e^{(8)})^{-1}$ are two prime cycles and $Q = e^{(1)} e^{(2)}$ is a finite walk. The oriented line graph is depicted in the figure \ref{oriented_graph} which has $16$ nodes.
		
		As the equation (\ref{Ihara_zeta_1}) and (\ref{Ihara_zeta_2}) are not calculation friendly, we derive the Ihara zeta function using the Ihara determinant formula, which is
		\begin{equation}
		\frac{1}{\zeta_G(x)} = (1 - x^2)^{m - n}\det(I - Ax + (D - I)x^2),
		\end{equation}
		where $0 \leq x < \frac{1}{\lambda}$, $A$, and $D$ are the adjacency matrix, and degree matrix of graph $G$, as well as $I$ is the identity matrix of order $n$. The Ihara zeta function for the graph depicted in figure \ref{graph_representation} is 
		\begin{equation}
		\zeta_G(x) = \frac{1}{\left(1-x^2\right)^3 \left(48 x^{10}+32 x^8-32 x^7-8 x^6-32 x^5-4 x^4-8 x^3+3 x^2+1\right)}
		\end{equation}
		Hence for any probability value $p$, we have $\zeta_G(ap^\sigma) = \frac{1}{A}$ where
		\begin{equation}
		\begin{split}
		A = \left(1-a^2 p^{2 \sigma }\right)^3 [48 a^{10} p^{10 \sigma }+32 a^8 p^{8 \sigma }-32 a^7 p^{7 \sigma }-8 a^6 p^{6 \sigma }-32 a^5 p^{5 \sigma } & \\
		-4 a^4 p^{4 \sigma }-8 a^3 p^{3 \sigma }+3 a^2 p^{2 \sigma }+1]&.
		\end{split} 
		\end{equation}
		Also, for any $a$ we have
		\begin{equation}
		\zeta_G(a) = \frac{1}{\left(1-a^2\right)^3 \left(48 a^{10}+32 a^8-32 a^7-8 a^6-32 a^5-4 a^4-8 a^3+3 a^2+1\right)}.
		\end{equation}
		The first derivative
		\begin{equation}
		\begin{split} 
		\zeta_G'(x) = & \frac{6 x}{\left(1-x^2\right)^4 \left(48 x^{10}+32 x^8-32 x^7-8 x^6-32 x^5-4 x^4-8 x^3+3 x^2+1\right)}\\
		& -\frac{480 x^9+256 x^7-224 x^6-48 x^5-160 x^4-16 x^3-24 x^2+6 x}{\left(1-x^2\right)^3 \left(48 x^{10}+32 x^8-32 x^7-8 x^6-32 x^5-4 x^4-8 x^3+3 x^2+1\right)^2}.
		\end{split} 
		\end{equation}
		It gives us the expression of $\frac{1}{\sigma(1 - a\zeta_G'(a))}$, where $\sigma(1 - a\zeta_G'(a)) = $
		\begin{equation}
		\begin{split}
		\sigma -\frac{6 a^2 \sigma }{(a-1)^5 (a+1)^4 \left(2 a^2+1\right)^2 \left(12 a^5+12 a^4+8 a^3-a-1\right)}& \\
		-\frac{2 \left(120 a^6+4 a^4-56 a^3-14 a^2-12 a+3\right) a^2 \sigma }{(a-1)^5 (a+1)^3 \left(2 a^2+1\right)^3 \left(-12 a^5-12 a^4-8 a^3+a+1\right)^2}&
		\end{split}
		\end{equation}
		Now for any probability $P$ we have $s(p) = p\frac{\zeta_G(ap^\sigma)-\zeta_G(a)+1-p^\sigma}{\sigma(1 - a\zeta_G'(a))} = \frac{p\times A}{B}$ where
		\begin{equation}
		\begin{split}
		A = 1 - p^\sigma + \frac{1}{(a-1)^4 (a+1)^3 \left(2 a^2+1\right)^2 \left(6 a^3+a-1\right) (2 a (a+1)+1)} &\\
		-\frac{1}{\left(a p^{\sigma }-1\right)^4 \left(a p^{\sigma }+1\right)^3 \left(2 a^2 p^{2 \sigma }+1\right)^2 \left(6 a^3 p^{3 \sigma }+a p^{\sigma }-1\right) \left(2 a p^{\sigma } \left(a p^{\sigma }+1\right)+1\right)} & \\
		\text{and}~ B = \sigma -\frac{8 a^3 (a (a (2 a (a (24 a (a+1)+11)-2)-13)-8)-3) \sigma }{(a-1)^5 (a+1)^4 \left(2 a^2+1\right)^3 \left(6 a^3+a-1\right)^2 (2 a (a+1)+1)^2}.& 
		\end{split}
		\end{equation}
		The Ihara entropy of the billiard system is $S_G([\mathcal{P}]) = \sum_i s(p_i)$, where $\mathcal{P} = \{p_i\}_{i = 1}^W$ is any given discrete probability distribution.

	\section*{Acknowledgement}
		
		SD is thankful to his Ph.D. supervisor Dr. Subhashish Banerjee for introducing him with the Ihara Zeta function and its applications in quantum information theory.
		

\begin{thebibliography}{10}
	
	\bibitem{terras2010zeta}
	Audrey Terras.
	\newblock {\em Zeta functions of graphs: a stroll through the garden}, volume
	128.
	\newblock Cambridge University Press, 2010.
	
	\bibitem{ihara1966discrete}
	Yasutaka Ihara.
	\newblock On discrete subgroups of the two by two projective linear group over
	p-adic fields.
	\newblock {\em Journal of the Mathematical Society of Japan}, 18(3):219--235,
	1966.
	
	\bibitem{giscard2017algebraic}
	P-L Giscard and Paul Rochet.
	\newblock Algebraic combinatorics on trace monoids: extending number theory to
	walks on graphs.
	\newblock {\em SIAM Journal on Discrete Mathematics}, 31(2):1428--1453, 2017.
	
	\bibitem{deitmar2015ihara}
	Anton Deitmar.
	\newblock Ihara zeta functions of infinite weighted graphs.
	\newblock {\em SIAM Journal on Discrete Mathematics}, 29(4):2100--2116, 2015.
	
	\bibitem{dieudonne1973introduction}
	Jean~A Dieudonne.
	\newblock {\em Introduction to the theory of formal groups}, volume~20.
	\newblock CRC Press, 1973.
	
	\bibitem{tempesta2015theorem}
	Piergiulio Tempesta.
	\newblock A theorem on the existence of trace-form generalized entropies.
	\newblock {\em Proc. R. Soc. A}, 471(2183):20150165, 2015.
	
	\bibitem{tempesta2016beyond}
	Piergiulio Tempesta.
	\newblock Beyond the shannon--khinchin formulation: the composability axiom and
	the universal-group entropy.
	\newblock {\em Annals of Physics}, 365:180--197, 2016.
	
	\bibitem{kotani20002}
	Motoko Kotani and Toshikazu Sunada.
	\newblock 2.-zeta functions of finite graphs.
	\newblock {\em Journal of Mathematical Sciences-University of Tokyo},
	7(1):7--26, 2000.
	
	\bibitem{brouwer2011spectra}
	Andries~E Brouwer and Willem~H Haemers.
	\newblock {\em Spectra of graphs}.
	\newblock Springer Science \& Business Media, 2011.
	
	\bibitem{bulo2012efficient}
	Samuel~Rota Bul{\`o}, Edwin~R Hancock, Furqan Aziz, and Marcello Pelillo.
	\newblock Efficient computation of ihara coefficients using the bell polynomial
	recursion.
	\newblock {\em Linear Algebra and Its Applications}, 436(5):1436--1441, 2012.
	
	\bibitem{inverting_formal_power_series}
	Inverting formal power series wrt. composition.
	\newblock Mathematics Stack Exchange.
	\newblock URL:https://math.stackexchange.com/q/334245 (version: 2013-03-18).
	
	\bibitem{shannon1963mathematical}
	Claude~E Shannon and Warren Weaver.
	\newblock The mathematical theory of communication. 1949.
	\newblock {\em Urbana, IL: University of Illinois Press}, 1963.
	
	\bibitem{shannon1948mathematical}
	Claude~Elwood Shannon.
	\newblock A mathematical theory of communication.
	\newblock {\em Bell system technical journal}, 27(3):379--423, 1948.
	
	\bibitem{khinchin2013mathematical}
	A~Ya Khinchin.
	\newblock {\em Mathematical foundations of information theory}.
	\newblock Courier Corporation, 2013.
	
	\bibitem{pollicott2001dynamical}
	Mark Pollicott.
	\newblock Dynamical zeta functions.
	\newblock In {\em PROCEEDINGS OF SYMPOSIA IN PURE MATHEMATICS}, volume~69,
	pages 409--428. Providence, RI; American Mathematical Society; 1998, 2001.
	
	\bibitem{nikitin2001ihara}
	Aleksei~Mikhailovich Nikitin.
	\newblock The ihara--selberg zeta function of a finite graph and symbolic
	dynamics.
	\newblock {\em Algebra i Analiz}, 13(5):134--149, 2001.
	
	\bibitem{lind1995introduction}
	Douglas Lind, Brian Marcus, Lind Douglas, and Marcus Brian.
	\newblock {\em An introduction to symbolic dynamics and coding}.
	\newblock Cambridge university press, 1995.
	
	\bibitem{dahlqvist1995approximate}
	Per Dahlqvist.
	\newblock Approximate zeta functions for the sinai billiard and related
	systems.
	\newblock {\em Nonlinearity}, 8(1):11, 1995.
	
	\bibitem{morita1991symbolic}
	Takehiko Morita.
	\newblock The symbolic representation of billiards without boundary condition.
	\newblock {\em Transactions of the American Mathematical Society},
	325(2):819--828, 1991.
	
	\bibitem{hagberg2008exploring}
	Aric Hagberg, Pieter Swart, and Daniel S~Chult.
	\newblock Exploring network structure, dynamics, and function using networkx.
	\newblock {\em Proceedings of the 7th Python in Science Conference
		(SciPy2008)}, page 11–15, Aug 2008.
	\newblock (Pasadena, CA USA).
	
	\bibitem{Weisstein_Sinai}
	Weisstein, Eric W. "Sinai Billiards." From MathWorld--A Wolfram Web Resource.
	http://mathworld.wolfram.com/SinaiBilliards.html.
	
\end{thebibliography}

\end{document}